\newcommand\opfail[1]{\overset{{#1}}{-x}}
\newcommand{\Om}{\textit{TM}}
\reservestyle{\scheduler}{\textit}
\reservestyle{\txexec}{\texttt}
\reservestyle{\operations}{\mathsf}
\newcommand{\textbfup}[1]{\textup{\texttt{#1}}}
\reservestyle{\mech}{\textbfup}
\reservestyle{\types}{\textit}
\reservestyle{\monitors}{\mathit}
\reservestyle{\infrule}{\textsc}
\newcommand{\opmem}{\<txmem>\xspace}
\newcommand{\first}{\<first>\xspace}
\newcommand{\countmech}{\<count>\xspace}
\newcommand{\boundedstoragehookup}{bounded storage hookup\xspace}
\newcommand{\QueueInfo}{queue info\xspace}
\let\orgdescriptionlabel\descriptionlabel
\renewcommand*{\descriptionlabel}[1]{%
  \let\orglabel\label
  \let\label\@gobble
  \phantomsection
  \edef\@currentlabel{#1}%
  \let\label\orglabel
  \orgdescriptionlabel{#1}%
}
\newcommand{\Abegin}{\ensuremath{A_{\<begin>}}}
\newcommand{\Aend}{\ensuremath{A_{\<end>}}}
\newcommand{\Ainit}{\ensuremath{A_{\<init>}}}
\newcommand{\Aterm}{\ensuremath{A_{\<term>}}\xspace}
\newcommand{\partTo}{\rightharpoonup}
\definecolor{verylightgray}{rgb}{.97,.97,.97}
\lstdefinelanguage{Solidity}{
	keywords=[1]{anonymous, assembly, assert, balance, break, call, callcode, case, catch, class, constant, continue, constructor, contract, debugger, default, delegatecall, delete, do, else, emit, event, experimental, export, external, false, finally, for, function, gas, if, implements, import, in, indexed, instanceof, interface, internal, is, length, library, log0, log1, log2, log3, log4, memory, modifier, new, payable, pragma, private, protected, public, monitor_storage, pure, push, require, return, returns, revert, selfdestruct, send, solidity, storage, struct, suicide, super, switch, then, this, throw, transfer, true, try, typeof, using, value, view, while, with, addmod, ecrecover, keccak256, mulmod, ripemd160, sha256, sha3}, 
	keywordstyle=[1]\color{blue}\bfseries,
	keywords=[2]{set,address, bool, byte, bytes, bytes1, bytes2, bytes3, bytes4, bytes5, bytes6, bytes7, bytes8, bytes9, bytes10, bytes11, bytes12, bytes13, bytes14, bytes15, bytes16, bytes17, bytes18, bytes19, bytes20, bytes21, bytes22, bytes23, bytes24, bytes25, bytes26, bytes27, bytes28, bytes29, bytes30, bytes31, bytes32, enum, int, int8, int16, int24, int32, int40, int48, int56, int64, int72, int80, int88, int96, int104, int112, int120, int128, int136, int144, int152, int160, int168, int176, int184, int192, int200, int208, int216, int224, int232, int240, int248, int256, mapping, string, elem, uint, uint8, uint16, uint24, uint32, uint40, uint48, uint56, uint64, uint72, uint80, uint88, uint96, uint104, uint112, uint120, uint128, uint136, uint144, uint152, uint160, uint168, uint176, uint184, uint192, uint200, uint208, uint216, uint224, uint232, uint240, uint248, uint256, var, void, ether, finney, szabo, wei, days, hours, minutes, seconds, weeks, years},	
	keywordstyle=[2]\color{teal}\bfseries,
	keywords=[3]{block, blockhash, coinbase, difficulty, gaslimit, number, timestamp, msg, gas, sender, sig, value, now, tx, gasprice, origin, add, epochinc, get, setminus, emptyset, system, first, fail, queue, ustore, isSelfOnly, gen_op, verdict, monitor},	
	keywordstyle=[3]\color{violet}\bfseries,
	keywords=[4]{init, term, begin,end},
	keywordstyle=[4]\color{BrickRed}\bfseries,
	identifierstyle=\color{black},
	sensitive=false,
	comment=[l]{//},
	morecomment=[s]{/*}{*/},
	commentstyle=\color{gray}\ttfamily,
	stringstyle=\color{red}\ttfamily,
	morestring=[b]',
	morestring=[b]"
}
\title{Transaction Monitoring of Smart Contracts}
\def\orcid#1{\smash{\href{http://orcid.org/#1}{\protect\raisebox{-1.25pt}{\protect\includegraphics{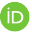}}}}}
\author{Margarita Capretto\inst{1,2}\orcid{0000-0003-2329-3769}\and
  Martin Ceresa\inst{1}\orcid{0000-0003-4691-5831} \and
  C\'esar S\'anchez\inst{1}\orcid{0000-0003-3927-4773}
}
\authorrunning{Capretto, Ceresa and Sánchez}
\institute{IMDEA Software Institute, Spain \and
  Universidad Politécnica de Madrid (UPM), Madrid, Spain}
\begin{document}

\maketitle

\begin{abstract}
  Blockchains are modern distributed systems that provide decentralized
  financial capabilities with trustable guarantees.
  Smart contracts are programs written in specialized programming languages
  running on a blockchain and govern how tokens and cryptocurrency are sent and
  received.
  Smart contracts can invoke other contracts during the execution of
  transactions initiated by external users.
  
  Once deployed, smart contracts cannot be modified and their pitfalls
  can cause malfunctions and losses, for example by attacks from
  malicious users.
  Runtime verification is a very appealing technique to improve the
  reliability of smart contracts.
  One approach consists of specifying undesired executions (\emph{never
  claims}) and detecting violations of the specification on the fly.
  %
  This can be done by extending smart contracts with additional
  instructions corresponding to monitor specified properties, resulting in an
  \emph{onchain} monitoring approach.

  In this paper, we study \emph{transaction monitoring} that consists
  of detecting violations of complete transaction executions and not of
  individual operations within transactions.
  Our main contributions are to show that transaction monitoring is not possible
in most blockchains and propose different execution mechanisms that would
enable transaction monitoring.
\end{abstract}



\section{Introduction}\label{sec:introduction}

%
%
Distributed ledgers (also known as \emph{blockchains}) were first
proposed by Nakamoto in 2009~\cite{nakamoto06bitcoin} in the
implementation of Bitcoin, as a method to eliminate trustable third
parties in electronic payment systems.
Modern blockchains incorporate smart
contracts~\cite{szabo96smart,wood2014ethereum}, which are state-full
programs stored in the blockchain that describe the functionality of
blockchain transactions, including the exchange of cryptocurrency.
Smart contracts allow us to describe sophisticated functionality enabling many
applications in decentralized finances (DeFi), decentralized governance, Web3,
etc.

%
%
Smart contracts are written in high-level programming languages for
smart contracts, like Solidity~\cite{solidity} and
Ligo~\cite{alfour20ligo} which are then typically compiled into
low-level bytecode languages like EVM~\cite{wood2014ethereum} or
Michelson~\cite{michelson}.
Even though smart contracts are typically small compared to
conventional software, writing smart contracts has been proven to be
notoriously difficult.
Apart from conventional software runtime errors (like underflow and overflow),
smart contracts also suffer from new attack patterns~\cite{Daian.2016.DAO} or
from attacks towards the blockchain infrastructure
itself~\cite{Robinson.2020.BlackForest}.
Smart contracts store and transfer money, and are openly exposed to
external users directly and through caller smart contracts.
Once installed the code of the contract is immutable and the effect of
running a contract cannot be reverted (the contract \emph{is} the
law).
%

%
%
There are two classic approaches to achieve software reliability, and there are
attempts to apply them to smart contracts:
\begin{compactitem}
\item \textbf{static techniques} using automatic techniques
  like static analysis~\cite{Stephens.2021.Smartpulse} or model
  checking~\cite{Permenev.2020.VerX}, or deductive software
  verification
  techniques~\cite{ahrendt20functional,nehai19deductive,bhargavan16formal,conchon19verifying},
  theorem
  proving~\cite{Bruno.2019.MiChoCoq,annenkov20concert,schiffl20formal}
  or assisted formal construction of
  programs~\cite{Sergey.2018.Scilla}.
\item \textbf{dynamic
    verification}\cite{ellul18runtime,azzopardi18monitoring,li20securing}
  attempting to dynamically inspect the execution of a contract
  against a correctness specification.
\end{compactitem}
%
%
In this paper, we follow a dynamic monitoring technique.
Monitors are a defensive mechanism where developers write properties
that must hold during the execution of the smart contracts.
If a monitored property fails the whole transaction is aborted.
Otherwise, the execution finishes normally as stipulated by the code
of the contract.

Most of the monitoring techniques inject the monitor into the smart contract as
additional
instructions~\cite{ellul18runtime,azzopardi18monitoring,li20securing}, which is
called inline monitoring~\cite{leucker11teaching}.
The property to be monitored for a method of a given contract $A$ is
typically described as two parts: $\Abegin$, that runs at the
beginning of each call, and $\Aend$, which is checked at the end.
This monitoring code can inspect the storage of contract $A$ and read
and modify specific monitor variables.
For example, monitors can compare the balance at the beginning and
end of the invocation.
However, monitors can only see the contents of $A$ and cannot
inspect or invoke other contracts.
We call these monitors \emph{operation monitors} as they allow us to inspect a
single operation invocation.
In this paper, we study a richer notion of monitoring that can inspect
information across the running transaction, illustrated by our running example.

\paragraph*{Running example: Flash Loans}
%
The aim of a flash loan contract is to allow other contracts to
borrow balance \emph{without any collateral}, provided that the
borrowed money is repaid in the same transaction (perhaps with some
interest)~\cite{EIP.FlashLoan}.
A simple way to specify the correctness of a flash loan contract $A$ is by the
following two informal properties:

\newcommand{\FLSafety}{\textbf{FL-safety}\xspace}
\newcommand{\FLProgress}{\textbf{FL-progress}\xspace}

\vspace{0.2em}
\noindent\begin{tabular}{|l|p{0.8\textwidth}|}\hline
           \FLSafety & {\it No transaction can decrease the balance of \(A\)} \\ \hline
           \FLProgress & {\it A request must be granted unless \textup{\FLSafety} is violated}\\\hline
\end{tabular}
\vspace{0.2em}

Fig.~\ref{fig:listings}(a) shows a simple smart contract attempting to
implement a flash loan lender.
Function \lstinline|lend| checks that the lender contract has enough
tokens to provide the requested loan, saves the initial balance to
check that the loan has been repaid completely, and transfers
the amount requested to the borrower.
Upon return, \lstinline|lend| checks that the loan has been paid back.

\begin{figure}[t!]
  \begin{tabular}{l}
\begin{lstlisting}[language=Solidity,numbers=none]
contract Lender {
    function lend(address payable dest, uint amount) public {
      require(amount <= this.balance);
      uint initial_balance = this.balance; 
      dest.transfer(amount); 
      assert(this.balance >= initial_balance); 
    }
  }
\end{lstlisting} \\
\multicolumn{1}{c}{(a) A flash loan implementation attempt} \\[0.5em]
\begin{lstlisting}[language=Solidity,numbers=none]
  contract Client {
    Lender l1, l2;
    function borrowAndInvest() public {
      l1.lend(100);l2.lend(200);
      invest(300); 
      l1.transfer(100);l2.transfer(200);
    }
  }
\end{lstlisting}  \\
\multicolumn{1}{c}{(b) A flash loan client} \\[0.5em]
\begin{lstlisting}[language=Solidity,numbers=none]
  contract MaliciousClient {
    Lender l;
    function borrowAndInvest() public {
      l.lend(100); 
      invest(100);
    }
  }
\end{lstlisting} \\
\multicolumn{1}{c}{(c) A malicious flash loan client}
  \end{tabular}
  \label{fig:listings}
  \caption{Pseudocode for contracts \lstinline|Lender|, \lstinline|Client| and \lstinline|MaliciousClient|.}
\end{figure}

Unfortunately, the lender smart contract in Fig.~\ref{fig:listings}(a)
does not fulfill property \FLProgress.
Consider a client, for example Fig.~\ref{fig:listings}(b), that borrows money
from different lenders, then invests the borrowed money to obtain a profit and
finally pays back to the lenders.
In other words, the contract \lstinline|Client| in Fig.~\ref{fig:listings}(b)
collects all the money upfront \emph{before} investing it and then pays back the
lenders.
The contract \lstinline|Client| will not successfully borrow from the lender in
Fig.~\ref{fig:listings}(a), be-cause contract \lstinline|Lender| expects to be
paid back within the scope of method \lstinline|lend|.
However, the contract \lstinline|Client| exercises correctly \FLSafety and
\FLProgress, and returns the borrowed tokens before the transaction finishes.
The problem is that contract \lstinline|Lender| is too \emph{defensive} and only
allows repayments within the control flow of function \lstinline|lend| and
\emph{not in arbitrary points within the enclosing transaction}.
Alternatively, a lender contract could lend funds with the hope that
the client returns the loan before the end of the transaction, but then a
malicious contract, like in Fig.~\ref{fig:listings}(c), would violate
\FLSafety easily.
We cannot solve this problem with operation monitors because it is not possible
within the scope of \lstinline|lend| to successfully predict or guarantee
whether the loan will be repaid within the transaction.

In this article, we propose to extend monitors with two additional
functions: $\Ainit$, which executes before the first call to $A$ in a
given transaction; and $\Aterm$, which executes after the last call to
$A$ (equivalently, at the end of the transaction).
%
As for $\Abegin$ and $\Aend$, $\Ainit$ and $\Aterm$ have access to the
storage and can fail but cannot be called from other contracts or emit
operations.
We call these monitors \emph{transaction monitors} since they can
check properties of the whole transaction.
With transaction monitors, we implement a lender contract that satisfies
\FLSafety and \FLProgress by saving the balance at the beginning of a
transaction in \lstinline|init| and comparing it with the final balance in
\lstinline|term| as shown in Fig.~\ref{fig:trmonitor}.

\begin{figure}[t!]
\begin{lstlisting}[language=Solidity,numbers=none]
  contract Lender {
    function lend(address payable dest, uint amount) public {
      require(amount <= this.balance);
      dest.transfer(amount);
    }  
  } with monitor {
    uint initial_balance;    
    init { initial_balance = this.balance; }
    term { assert(this.balance >= initial_balance); }
  }
\end{lstlisting}
\caption{A correct flash loan implementation using transaction monitors}
\label{fig:trmonitor}
\end{figure}






As for future work, we envision even more sophisticated monitors that guarantee
properties that involve two or more contracts---like checking that the combined
balance of $A$ and $B$ does not decrease---or even
that predicating about all
\begin{wrapfigure}[7]{l}{0.48\linewidth}
  \vspace{-1.6em}
  \begin{tabular}{|l|l|}\hline
    Global monitors &  future work \\ \hline
    Multicontract monitors & future work \\ \hline
    \rowcolor{gray!35}
    Transaction monitors & this paper \\\hline
    Operation Monitors & \cite{azzopardi18monitoring,ellul18runtime,li20securing} \\\hline
  \end{tabular}
  \caption{Monitors hierarchy}
  \label{fig:hierarchy}
\end{wrapfigure}
contracts participating in a transaction of the whole blockchain.
We refer to them as \textit{multicontract monitors} and \textit{global
  monitors}, respectively, but they are out of the scope of this paper, where we
focus on \textit{transaction monitors}.
Fig.~\ref{fig:hierarchy} shows the monitoring hierarchy.


%
In summary, the contributions of the paper are the following:
\begin{compactitem}
\item The notion of transaction monitors and its formal definition.
\item A proof that current blockchains cannot implement transaction
  monitors, and a list of simple mechanisms that allow their
  implementation.
\item An exhaustive study of how the proposed mechanisms interact with
  each other and the basic building blocks to implement full-fledged
  transaction monitors.
\end{compactitem}

The rest of the paper is organized as follows. 
Section~\ref{sec:modelOfComputation} describes the model of computation. 
Section~\ref{sec:transactionMonitors} studies transaction monitors.
Section~\ref{sec:exec:mechanisms} introduces new execution mechanisms,
and in Section~\ref{sec:results}, we study how these new mechanisms
implement transaction monitors.
Finally, Section~\ref{sec:discussion} concludes.

\vfill\pagebreak



\section{Model of Computation}\label{sec:modelOfComputation}

We introduce now a general model of computation that captures the
evolution of smart contract blockchains.

\subsubsection{An Informal Introduction.}
Blockchains are a public incremental record of the executed
transactions.
Even though several transactions are packed in ``blocks''--- which are
totally ordered---, transactions within a block are also totally
ordered.
Therefore, we can interpret blockchains as totally ordered sequences of
transactions.

Transactions are in turn composed of a sequence of operations
where the initial operation is an invocation from an external user.
Each operation invokes a destination contract (where contracts are identified by
their unique address).
Operations also contain the name of the invoked method,
arguments and balance (in the cryptocurrency of the underlying
blockchain), and an amount of gas\footnote{The notion of gas is
  introduced to make all operations terminate because each
  individual instruction consumes gas and once the initial operation
  is invoked no more gas can be added to the transaction.}.
The execution of an operation follows the instructions of the program
(the smart contract) stored in the destination address.

Given the arguments and state of the blockchain, the code of every
smart contract is \emph{deterministic} which makes the blockchain
predictable and amenable to validation.
We model smart contracts as pure computable functions taking their input
arguments and the current local storage of the contract,
and returning (1) the changes to be performed in the local storage; (2) a
list of further operations to be executed.
No effect takes place in their local storage until the end of the
operation.
This abstraction does not impose any restriction since every imperative program
can be split into a collection of basic pure code blocks separated by the
instructions with effects.

The execution of a transaction consists of iteratively executing
pending operations, computing their effects (including updating
the pending operations) until either (1) the queue of pending
operations is empty, or (2) some operation fails or the gas is
exhausted.
In the former case, the transaction commits and all changes are made
permanent.
In the latter case, the transaction aborts and no effect takes place
(except that some gas is consumed).

\subsubsection{Model of Computation.}
We now formally model the state of a blockchain during the execution
of the operations forming a transaction.
We represent a blockchain configuration as a pair $(\Sigma,\Delta)$
where:
\begin{description}
\item[Blockchain state] $\Sigma$ is a partial map between addresses
  and the storage and balance of smart contracts,
\item[Blockchain context] $\Delta$ contains additional information about the
blockchain, such as block number, current time, amount of money sent in the
  transaction, etc.
\end{description}
Blockchain contexts may vary since different blockchains carry different
information, but either implicitly or explicitly, every
blockchain maintains a blockchain state.
The computation of a successful transaction begins with an external
operation \(o\) from a configuration $(\Sigma,\Delta)$ and either
aborts or finishes into a final configuration $(\Sigma',\Delta')$.
%
%
%

We model a \emph{smart contract} as a partial map
$A : \Delta \times \bbbp \times \bbbs \times \bbbn \rightharpoonup
(\bbbs \times [\mathcal{O}])$
where \(\bbbp\) is the set of all possible parameters of \(A\),
\(\bbbs\) the set of all possible storage states, \(\mathcal{O}\) the
set of operations and \([\cdot]\) is a set operator representing lists
of elements of a given set.
Smart contracts written in imperative languages with effects can be
modeled as sequences of pure blocks where effects happen at the end in
the standard way.

 %
%

\paragraph{Operations.}
An operation is a record containing the following fields:
\begin{compactitem}
\item \(\<dest>\) the address to invoke;
\item \(\<src>\) the address initiating the operation;
\item \(\<param>\) parameters expected by the smart contract at address \(\<dest>\);
\item \(\<money>\) the amount of crypto-currency sent in the operation.
\end{compactitem}
We use standard object notation to access each field, so
\(o.\<dest>\) is the destination address, \(o.\<src>\) is the source
address, \(o.\<param>\) the parameters and \(o.\<money>\) the
amount transferred.
%


\paragraph{Transactions.}
A transaction results from the execution of a sequence of operations
starting from an external operation placed by an external user.
If an operation fails the transaction fails and the blockchain
state remains unchanged.
A successful operation \(o\) results in a new storage and a list of new operations
\(ls\).
The blockchain updates the storage of smart contract \(o.\<dest>\) and balance 
of both smart contracts \(o.\<dest>\) and \(o.\<src>\)
generating a new blockchain configuration and the list \(ls\) is added to the
current pending queue of operations.
Operations are executed one at a time modifying the blockchain
configuration until some operation fails or there is no more
operations on the pending queue.
In the second case, the transaction is successful and the last blockchain
configuration consolidates.

We assume there is an implicit partial map from addresses to smart contracts
$\mathbb{G} : \<Addr> \partTo \<SmartContract>$.
Moreover, we assume map \(\mathbb{G}\) does not change since we assume that smart
contracts cannot install new contracts.

\paragraph{Operation Execution.}
Let $o$ be an operation and $(\Sigma,\Delta)$ a blockchain configuration.
The evaluation of $o$ from $(\Sigma,\Delta)$ results in a new
configuration and a list of operations \(ls\), 
which we denote $(\Sigma,\Delta) \xrightarrow{o} (\Sigma',\Delta',ls)$ whenever:
\begin{compactenum}
\item The source smart contract has enough balance,
  \(\Sigma(o.\<src>) \geq o.\<money>\)
\item The invocation to the smart contract is successful:
  \[ \mathbb{G}(o.\<dest>)(\Delta,o.\<param>,\Sigma(o.dst).st,\Sigma(o.dst).balance) = (st',ls)\]
\end{compactenum}
The new blockchain configuration state $\Sigma'$ is the result of: 1) adding
\(o.\<money>\) into the balance of \(o.\<dest>\) and subtracting it
from \(o.\<src>\), and 2) updating the storage as
$\Sigma'(o.\<dest>).st=st'$.
Note that we leave the evolution of \(\Delta\) unspecified as it is
system dependant.
In Section~\ref{sec:results}, we implement different additional blockchain
features by inspecting (and possibly modifying) the blockchain context.
For failing evaluation of operations, we use
$(\Sigma,\Delta) \opfail{o}$.
%

\paragraph{Execution Order.}
%
The execution can proceed in different ways.
We consider two execution orders: new operations are added to the
beginning of the pending queue (a DFS strategy) and new operations added to
the end of the pending queue (a BFS strategy).
This results in the following transition rules:
\[
 \begin{array}{c}
    \inference*
      {(\Sigma,\Delta) \opfail{o}}
      {(\Sigma,\Delta, o :: os) \not\leadsto_{a}} \\
  \begin{tabular}{cc}
    \inference*
      {(\Sigma,\Delta) \xrightarrow{o} (\Sigma', \Delta', ls)}
      {(\Sigma,\Delta, o :: os) \leadsto_{\<dfs>} (\Sigma', \Delta', ls ++ os)}
      &
    \inference*
      {(\Sigma,\Delta) \xrightarrow{o} (\Sigma', \Delta', ls)}
      {(\Sigma,\Delta, o :: os) \leadsto_{\<bfs>} (\Sigma', \Delta', os ++ ls)}
  \end{tabular}
  \end{array}
\]
The execution starting from an external operation \(o\) is a sequence
of steps \((\leadsto_{a})\)---with \(a\) fixed to be either \<dfs> or
\<bfs>---until the pending operation list is empty or the execution of
the next operation fails.
Beginning from a blockchain configuration $(\Sigma,\Delta)$ and an
initial operation $o$, a transaction execution is a sequence of
operation executions: \(
  (\Sigma, \Delta, [o]) \leadsto_{a} (\Sigma_{1}, \Delta_{1}, os_{1})
  \leadsto_{a} \ldots \leadsto_{a} (\Sigma_{n}, \Delta_{n}, []) 
\)
or that
\(
  (\Sigma, \Delta, [o]) \leadsto_{a} (\Sigma_{1}, \Delta_{1}, os_{1}) 
  \leadsto_{a} \ldots \leadsto_{a} (\Sigma_{n}, \Delta_{n}, os_{n})\not\leadsto_a
\)
%

A transaction can fail either because of gas exhaustion or an internal operation
has failed, and in that case, we have a sequence of \(\leadsto_{a}\) leading to a final
step marked as \(\not\leadsto_a\) following the failing operation.

Finally, after every successful execution, the blockchain takes the
last configuration and upgrades its global system.

The model of computation described in this section does not follow
exactly a call-and-return model like the Ethereum blockchain
does~\cite{wood2014ethereum}.
However, it is easy to see that it can be simulated in our model by
having each contract explicitly keeping its stack of returned values.
\section{Transaction Monitors and Execution Mechanisms}
\label{sec:transactionMonitors}

We now introduce \emph{transaction monitors} and show that it is not
possible to implement them in current blockchains.
We present different extensions that allow us to implement
transaction monitors.

\subsection{Transaction Monitors}\label{sec:Transaction:Monitors}
Transaction monitors allow us to reason about properties of transactions.
Each smart contract $A$ is equipped with a monitor storage and four
especial methods $A_{\<init>}$,  $A_{\<begin>}$, $A_{\<end>}$ and ${\Aterm}$.
These new methods cannot emit operations or modify smart contract
storage, however, they have their own monitor storage.
We assume that these new methods are interpreted by the blockchain and if one of
these methods fail the whole transaction fails.
Otherwise, the effect in the blockchain is the same
as if it was executed without monitors.
The functions $A_{\<init>}$ and $\Aterm$ can read the storage and
balance of the smart contract and read and write the monitor storage.
Function \(A_{\<init>}\) is executed before the first time $A$ is
invoked in the transaction and function \(\Aterm\) is invoked
after the last interaction to $A$ finished in the transaction, and
does not modify the monitor storage.
Functions $A_{\<begin>}$ and $A_{\<end>}$ are executed at the beginning and
at the end of each \emph{operation} that is executed in $A$, as
in operation monitors~\cite{ellul18runtime} (note that $A_{\<begin>}$
and $A_{\<end>}$ can be easily implemented by inlining their code
around the methods of $A$).
The method $A_{\<begin>}$ takes the same arguments as any $A$
operation plus the monitor storage, while function $A_{\<end>}$ has
access to the result of the operation (list of the operation emitted
and the new storage) plus the monitor storage.
We call the resulting smart contracts \emph{monitored smart
  contracts}.

%

%

\paragraph{Operation Monitors.}
We first extend the model of computation to include operation monitors.
A monitored operation execution is a normal operation execution where the
corresponding operation monitor is executed before and after the operation is
executed.

We define $(\xrightarrow[\<mon>]{o})$ modifying $(\xrightarrow{o})$ as
follows.
Before executing $o$, (1) procedure $\mathbb{G}(o.\<dest>).\<begin>$ is
invoked, then (2) operation $o$ is executed, and (3) finally
$\mathbb{G}(o.\<dest>).\<end>$ runs.
That is, operation monitors are simply restricted functions executed
before and after each operation.
We can then specialize \(\leadsto_{a}\) with operation monitors,
that is, use relation \((\xrightarrow[\<mon>]{o})\) instead of
relation \((\xrightarrow{o})\) to obtain transaction executions that
use operation monitors.

Procedures \(\<begin>\) and \(\<end>\) can only modify the
private monitor storage and fail, and thus, they cannot interfere
in the normal execution of smart contracts (except by failing more
often).

\paragraph{Transaction Monitors.}

We redefine transaction monitors execution as a restriction of the
transaction execution relation so transactions invoke \(\<init>\)
and \(\<term>\) when required.
In this case, \(\<init>\) can change the monitor storage, and thus,
can modify the blockchain state.
We define a new relation \({->>}_{a}\) the smallest relation defined
by the following inference rules:

\[ \begin{array}{ccc}
      \inference
      {A_{\<init>}(\Sigma(A)) = \Sigma'}
      {(\Sigma,\Delta,os) {->>}_{a} (\Sigma', \Delta,os)} &
      \inference                                                            
      {\Aterm(\Sigma(A))}
      {(\Sigma,\Delta,[]) {->>}_{a}(\Sigma,\Delta, [])} &
      \inference
      {(\Sigma,\Delta,o :: os) {\leadsto}_{a} (\Sigma',\Delta',os')}
      {(\Sigma,\Delta,o :: os) {->>}_{a} (\Sigma',\Delta',os')}
   \end{array}
\]

%
Note that we sacrifice a deterministic operational semantics in favor of
a clearer set of rules.
As before, we use \((-x)\) to represent failing transactions.

\[
  \begin{array}{ccc}
        \inference
        {(\Sigma,\Delta) \opfail{o}}
        {(\Sigma,\Delta,os) {-x}} &
        \inference
        {A_{\<init>}(\Sigma(A)) \opfail{}}
        {(\Sigma,\Delta,os) {-x}} &
        \inference
        {A_{term}(\Sigma(A)) \opfail{}}
        {(\Sigma,\Delta,[]) {-x}}
  \end{array}
\]
%
%
Finally, we define a monitored trace of a transaction same as before,
given a blockchain configuration \((\Sigma,\Delta)\) and an external
operation $o$:
\[
  (\Sigma, \Delta, [o]) {->>}_{a} (\Sigma_{1}, \Delta_{1}, os_{1}) {->>}_{a} (\Sigma_{2},\Delta_{2}, os_{2}) {->>}_{a} \ldots
  {->>}_{a} (\Sigma_{n},\Delta_{n},[])
\]

To remove the non-determinism we add a new relation that restricts the
legal runs.
This relation knows the set of visited addresses (smart contracts),
and invokes an initialization method, and at the very end of the
evaluation of a transaction uses the same set to invoke their
corresponding term method.
\[
  \begin{array}{c}
  \inference
  { (\Sigma,\Delta, os) {->>}^{o}_{a} (\Sigma', \Delta', os')
    & \hspace{3em}o.\<dest> \in E
  }
  { E \vdash (\Sigma,\Delta,o :: os) \Rightarrow_{a} E \vdash (\Sigma', \Delta', os')} \\
    \vspace{0.8em} \\
  \inference
    { (\Sigma'',\Delta,os) \leadsto^{o}_{a} (\Sigma', \Delta', os')
    & \hspace{2em}o.\<dest> \notin E \hspace{2em}
    (\Sigma,\Delta, os) {->>}^{A_{\<init>}}_{a} (\Sigma'', \Delta, os)
  }
  { E \vdash (\Sigma,\Delta,o :: os) \Rightarrow_{a} E \cup \{o.\<dest>\} \vdash (\Sigma', \Delta', os')} \\
    \vspace{0.8em} \\
  \inference
    {
    (\Sigma,\Delta, []) {->>}^{A_{\<term>}}_{a} (\Sigma, \Delta, os)
    & \hspace{2em}e \in E
    }
    {E \vdash (\Sigma,\Delta, []) \Rightarrow_{a} E \setminus \{e\} \vdash (\Sigma, \Delta, []) }
  \end{array}
\]

As result, we only accept traces generated by relation
\((\Rightarrow_{a})\), beginning with a blockchain configuration
\((\Sigma,\Delta)\) and an external operation \(o\) resulting in
failure or a new blockchain configuration \((\Sigma', \Delta')\):
\( \emptyset \vdash (\Sigma,\Delta,[o]) \Rightarrow_{a} \ldots \Rightarrow_{a} \emptyset \vdash (\Sigma', \Delta', []).\)

\subsection{Transaction Monitors in BFS/DFS}\label{sec:impossibility}

Unfortunately, transaction monitors cannot be implemented in
blockchains that follow DFS or BFS evaluation strategies.
We show now a counter-example.
Consider a transaction monitor for $A$ that fails when smart contract
$A$ is called \textbf{exactly} once in a transaction.
The monitor storage contains a natural number which function
\(\<init>\) sets to $0$, $\<begin>$ adds one to the counter, $\<end>$
does nothing, and $\<term>$ fails if the monitor storage is exactly
one.

Now let $(\Sigma,\Delta)$ be a blockchain configuration, and let $A$
and $B$ be two smart contracts, where $A$ is being monitored for the
``only once'' property.
Consider the following two executions of external operations from
$(\Sigma,\Delta)$:
\begin{compactitem}
\item $o_1$ invokes $B.f$ which then invokes $o_{A1}$ in $A$,
\item $o_2$ invokes $B.g$ which then invokes $o_{A1}$ and $o_{A2}$ in $A$.
\end{compactitem}
The monitor for ``only once'' must reject the transaction beginning
with \(o_{1}\), but accept the transaction beginning with \(o_{2}\).

Consider a DFS strategy.
Starting from \(o_{1}\), the execution trace is
\[
  (\Sigma,\Delta,[o_{1}]) \leadsto_{\<dfs>} (\Sigma_{1},\Delta_{2},[o_{A1}]) \leadsto_{\<dfs>} (\Sigma_{2},\Delta_{2},as_{1})
\]
with corresponding sequence of pending operations $[o_1]$, $[o_{A1}]$, $as_1$.
Starting from $o_{2}$ the sequence of pending operations is $[o_2]$,
$[o_{A1};o_{A2}]$, $as_1++[o_{A2}]$,\ldots,$[o_{A2}]$, $as_2$.
%
%
It is not possible to distinguish between the traces generated by
\(o_{1}\) and \(o_{2}\), as anything that operation $o_{A1}$ and its
descendants \(as_{1}\) do will happen before the execution of $o_{A2}$
in the second transaction.
In other words, \(o_{1}\) and all the operations that can be generated
by it or its descendants 
cannot know that some other invocation to \(A\) is pending so $A$
cannot fail preventively.
%
%
At the same time $o_{A1}$ is the only chance in $A$ to make the first
transaction fails because there is no other operation in $A$.
Therefore, the two runs are identical up to the end of $o_{A1}$ and
one must fail and the other must not fail.

A BFS scheduler can distinguish between the execution of operations
\(o_{1}\) and \(o_{2}\) by using a \emph{recurring operation}.
Since new operations are added to the end of the pending queue, $A$
can inject an operation that checks $A$'s state and conditionally (if
the test that would make $\<term>$ fail is true) injects itself at the
end of the pending queue.
If the condition that makes $\<term>$ accept is never met, the
transaction fails because the recurring operation injects itself
ad-infinitum, exhausting gas.
In Section~\ref{s:results:BFSScheduler} we use recurring
operations thoroughly.
However, a simple variation of this example that includes comparing
with a third transaction 
where $A$ is invoked three times shows that 
BFS cannot implement ``only once'' either (as BFS cannot distinguish
between the third invocation to $A$ and a first invocation to $A$ in 
a transaction following the one originated by $o_2$).
%
For a detailed proof see Appendix~\ref{app:bfs:imp}.

\section{Execution Mechanisms}\label{sec:exec:mechanisms}

We propose new mechanisms and study if they help to implement transaction
monitors.
However, adding features to blockchains is potentially dangerous since
it can introduce unwanted behaviour\cite{Daian.2016.DAO}.
We focus on simple mechanisms that are easy to implement and are
backwards compatible.

Since $\Abegin$ and $\Aend$ can already be implemented using inlining,
we focus on mechanisms that allow executions at the beginning and end
of transactions, which can aid to implement $\Ainit$ and $\Aterm$.
We present two kinds of mechanisms, ones that introduce a new
instruction, and others that add a new special method to smart
contracts.
In the next section, we compare their relative power and if they can
implement transaction monitors.

\subsubsection{Mechanisms that Add New Instructions.}
The first four mechanisms add new instructions and can be easily implemented by
bakers/miners collecting the information required in the context \(\Delta\).
\begin{compactitem}
\item \textit{First.}  We consider a new instruction, \<first>, which
  returns true if the current operation is the first invocation
  to the smart contract in the current transaction.
  The context $\Delta$ can be extended to contain the set of contracts
  $F$ that have already run an operation in the current transaction,
  which allows us to implement \<first> as $A\not\in F$, where $A$
  is the smart contract that executes \<first>.
\item \textit{Count.}  We introduce now a new instruction, \<count>
  that returns how many invocations have been performed to methods
  of the contract in the current transaction.
  Again, the context $\Delta$ can easily count how many times each
  contract has been invoked.
\item \textit{Fail/NoFail.}
  This mechanism equips each contract with a new flag \<fail>
  that can be assigned during the execution of the contract (and that
  is false by default).
  The semantics is that at the end of the transaction, the whole
  transaction would fail if some contract has the \<fail> bit to true.
  For example, the failing bit allows us to implement flash loans as
  follows. A lender smart contract can set \<fail> to true when is
  lending money and change it to false only when the money is
  returned.
\item \textit{Queue info.}  We add a new operation, \<queue>,
  indicating if there is no more interaction between smart contracts.
  Or equivalently, if the only operations permitted in the pending queue are
  recurrent operations (which can only inject operations to the same contract).
  These operations must also be specially qualified in the contract,
  and the runtime system must make sure that they only generate
  operations to the same contract.
\end{compactitem}

\subsubsection{Mechanisms that Add New Methods or Storage.}

The following mechanisms modify the definition of smart contracts
either by adding new methods that are executed at particular moments
in a transaction or by adding special storage/memory.
\begin{compactitem}
\item \textit{Transaction Memory.}  Smart contracts are equipped with
  a special volatile memory segment that exists only during the
  execution of a transaction and which is created and initialized at
  the beginning of the transaction.
  We add a new segment in the smart contract indicating the initial values to be
  assigned.
  In concrete, each contract \(A\) indicates a new storage type for
  the transaction memory and a procedure that initializes it (which
  can read but not change the conventional storage).
  We use \<txmem> to refer to this mechanism.
\item \textit{Storage Hookup, Bounded and Unbounded.}
  The idea is to equip smart contracts with a new method that updates
  the storage after the last local operations in the transaction.
  %
  These methods can only modify the storage but
  not invoke other methods.
  A bounded version of this mechanism is restricted to terminating
  non-failing functions (for example, by restricting the class of
  programs).
  In addition, the unbounded version is arbitrary code that can fail.
  We use \<bstore> and \<ustore> to refer to these mechanisms.
\end{compactitem}


\section{Implementing Transaction Monitors}
\label{sec:results}

We say a mechanism $M$ implements another mechanism $N$ if and only if assuming
a blockchain equipped with $M$ can write every smart contract that a blockchain
equipped with mechanism $N$ can write.
We say that two mechanisms are equivalent if and only if they can implement each
other.
We disregard gas consumption here, only considering infinite computations (i.e.
we assume that one can always assume sufficient gas).

\begin{theorem}
  \label{th:first:count:tm:bsh}
  The following are equivalent: \opmem, \first, \countmech, and \<bstore>.
\end{theorem}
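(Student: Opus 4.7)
The plan is to prove the four-way equivalence by first showing that $\opmem$, $\first$, and $\countmech$ are pairwise interdefinable in a routine way, and then separately establishing that $\<bstore>$ is equivalent to (say) $\opmem$. The delicate step is the second one; the first is essentially mechanical.

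For the routine part, I would implement $\first$ and $\countmech$ using $\opmem$ by storing, respectively, a boolean flag or a natural-number counter in the transaction memory with the obvious initializer, and consulting or incrementing it on entry to every method. The converse would carve out a segment of regular storage to serve as transaction memory and re-run the user-supplied initializer on it at the start of every operation in which $\first$ holds (equivalently, in which $\countmech = 1$). Finally, $\first$ and $\countmech$ are directly interdefinable: $\first$ is the test $\countmech = 1$, and a regular-storage counter reset to one when $\first$ holds and incremented otherwise realizes $\countmech$.

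For the harder half, I would argue that $\opmem$ implements $\<bstore>$ as follows: given a contract $A$ with a bounded hookup $f$, build $A'$ by removing the hookup and prepending to every method code that, guarded by a $\opmem$-resident flag initialised to \textit{true} and flipped to \textit{false} on first execution, applies $\sigma := f(\sigma)$. Correctness rests on the fact that $A$'s storage is externally observable only through invocations of $A$, so postponing the application of $f$ from the end of one transaction to the first invocation of $A$ in the next transaction that touches $A$ yields executions indistinguishable to the rest of the blockchain; boundedness of $f$ guarantees the prepended code cannot diverge or introduce spurious failures. The converse, $\<bstore>$ implements $\opmem$, would carve out a dedicated segment of regular storage to play the role of transaction memory and use the bounded storage hookup at the end of every transaction to overwrite that segment with the user-supplied initializer applied to the rest of the storage; since $A$'s state is not observed between transactions, this matches the semantics of a freshly initialized $\opmem$ at the start of the next transaction that touches $A$.

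The main obstacle is precisely the timing mismatch between $\<bstore>$ and the other three mechanisms: $\<bstore>$ fires at end-of-transaction, whereas $\opmem$, $\first$, and $\countmech$ are observable during individual operations. Both crossings rely on the same observation---a contract's storage is externally visible only by calling that contract---which lets a storage update that logically belongs at end-of-transaction be shifted to the first operation on $A$ in the next transaction (and vice versa) without altering any externally observable behavior. Some care is needed to check that this observation is not spoiled by balance changes, but since all mechanisms under consideration touch only storage and not balance, this check is immediate.
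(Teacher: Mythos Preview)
Your proposal is correct and follows essentially the same route as the paper: the paper also dispatches the pairwise equivalences among \first, \countmech, and \opmem by the same flag/counter bookkeeping, and its key step for \<bstore> is precisely your observation that a contract's storage is observable only through its own invocations, so the end-of-transaction hookup can be deferred to the first call of the next transaction that touches $A$. The only notable difference is an implementation detail in the direction from \opmem to \<bstore>: the paper precomputes $f(\sigma)$ into a shadow field $s_{\<hookup>}$ at the end of \emph{every} method and copies that field into the real storage on the next transaction's first call, whereas you apply $f$ directly at that point; your variant is leaner but, as written, would also fire on the very first invocation of $A$ ever (where there is no previous transaction whose hookup to replay), so you need a one-bit persistent guard to suppress that spurious initial application---a boundary case the paper's shadow-copy construction handles automatically by initializing $s_{\<hookup>}$ to the deployed storage.
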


If contracts can know when their first invocation in the transaction
occurs, they can set the storage in different ways simulating
\countmech and \opmem.
Also, \countmech and \opmem can simulate \first, by checking if the
count is $0$ and initializing a volatile bit to \(true\).
More interesting is that \first can simulate \boundedstoragehookup
by applying the effect on the storage of
\boundedstoragehookup at the beginning of the next transaction.
Detailed proofs are included in Appendix~\ref{app:equiv}.

\begin{lemma}
  \label{l:ush:bsh}
  \label{l:ush:fnfhookup}
  Mechanism \<ustore> implements \<bstore> and \<fail>.
\end{lemma}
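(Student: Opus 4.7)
The plan is to give two simulations, handling each containment separately. For \<bstore>, the inclusion is essentially syntactic: the bounded storage hookup is by definition a restriction of the unbounded one to terminating non-failing functions, so any smart contract written for a \<bstore>-enabled blockchain can be reused verbatim on a \<ustore>-enabled blockchain by declaring its hookup method as the \<ustore> method. I would verify that the operational rules for the two mechanisms agree on this class of methods: since the hookup executes after the last local operation in both cases and only touches storage, and since the method is guaranteed to terminate without failing, the sequence of configurations produced by $(\Rightarrow_a)$ is identical. No encoding is required beyond this.

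For \<fail>, I would simulate the failing bit using a dedicated boolean slot in the contract's regular storage together with the \<ustore> method. Concretely, given a contract $A$ that uses \<fail>, I would translate it to a contract $A'$ that reserves a fresh storage field $b_A$ and rewrites every assignment to \<fail> into an assignment to $b_A$ with the same value; reads of \<fail> are translated symmetrically. I would then equip $A'$ with a \<ustore> method whose only action is to inspect $b_A$ and invoke a failing instruction precisely when $b_A$ is \emph{true}, leaving the storage otherwise unchanged.

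The correctness argument proceeds by induction on the length of a transaction execution. The key observations are: (i) at any intermediate configuration, the value of $b_A$ in the simulated state equals the value of the \<fail> bit in the \<fail>-semantics, because every write is mirrored and reads are mirrored; (ii) the \<ustore> method fires after the last local operation of $A$ in the transaction, which is exactly the point at which the \<fail> mechanism inspects the bit to decide whether to abort; and (iii) if the simulated transaction does fail (either because $b_A$ is \emph{true} or for any unrelated reason), then by the standard abort semantics all storage updates are reverted, so $b_A$ returns to its pre-transaction value, mirroring the ``false by default'' property of the \<fail> bit across transactions.

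The main obstacle is the interaction between the hookup and the rest of the transaction. One must check that placing the fail-check in \<ustore> cannot be detected through the observable effects of other contracts: because \<ustore> methods may only touch storage and cannot emit operations, and because a failing \<ustore> aborts the whole transaction (reverting all effects, including any stray writes to $b_A$), the simulation is externally indistinguishable from the native \<fail> mechanism. Once this invariant is stated and preserved inductively across each case of $(\Rightarrow_a)$, the two implementations follow.
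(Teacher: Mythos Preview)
Your proposal is correct and follows essentially the same approach as the paper: for \<bstore> you observe it is a syntactic restriction of \<ustore>, and for \<fail> you encode the bit as a fresh boolean field in storage with the \<ustore> method failing when that field is true. The paper's proof is considerably terser (three sentences total) and omits the inductive invariant and the analysis of the revert semantics that you spell out, but the underlying construction is identical.
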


\begin{proof}
  Mechanism \<ustore> implements \<bstore> trivially as it is just less
  restrictive.
  For \<fail> we add in the storage of $A$ a new field, \textit{fl} to
  represent the failing bit which is initialized to false when the
  contract is installed and updated to simulate the \<fail> instruction.
  At the end of the transaction, the \<ustore> hookup checks if
  \textit{fl} is true and fail.  Otherwise, it does nothing.\qed
\end{proof}

It can be proven that the other direction is not always possible.
Fig~\ref{fig:relations} shows graphically the previous results where
an arrow indicates that one mechanism implements another.
In this diagram, an absence of an arrow does not necessarily imply
impossibility but perhaps that the result depends on the execution order.
For example, in BFS blockchains \<first> can implement \<ustore>, but
this is impossible with DFS.

\begin{figure*}[t!]
    \centering
    \includegraphics[scale=0.5]{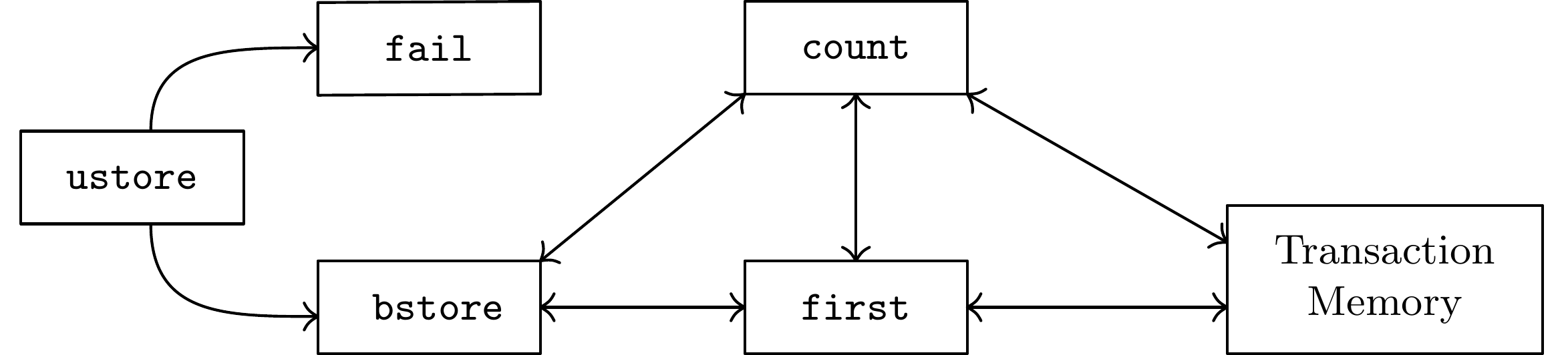}
    \caption{Relation between mechanisms for any scheduler. An arrow from
mechanism M to mechanism N means that M implements N.}
    \label{fig:relations}
\end{figure*}

Since \<first>, \countmech, \<bstore> and \opmem are all equivalent, from
now on we only refer to mechanism \first. 
It is easy to see that this mechanism is enough to implement \(\<init>\).

To implement \(\<term>\), we can either implement \<fail> or
\<ustore>, where \<fail> is simpler, and \<ustore> is more powerful but
requires a bigger change to blockchains.
\begin{theorem}
  \label{thm:FstFnF:TMon}
  Mechanisms \<first> \(+\) \<fail> implement transaction monitors.
\end{theorem}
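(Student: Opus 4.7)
The plan is, given a monitored contract $A$ with methods $\Ainit, \Abegin, \Aend, \Aterm$, to synthesize a contract $A'$ on a blockchain equipped with \first and \<fail> that exhibits the same observable behaviour. The monitor storage of $A$ becomes part of $A'$'s persistent storage. As the excerpt already notes, $\Abegin$ and $\Aend$ can be inlined around each method of $A'$ in the standard operation-monitor style, so the work reduces to simulating $\Ainit$, which must execute once before $A$ is first invoked, and $\Aterm$, which runs after the last interaction with $A$ and may fail the transaction. Simulating $\Ainit$ uses \first directly: at the entry of each method of $A'$, guarded by \first, I run the $\Ainit$ code on the appropriate state, with a small adjustment since in the monitored semantics $\Ainit$ sees $A$'s balance before the current operation's money has been added. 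Because \first returns true exactly on the first invocation of $A'$ in the transaction, the effects of $\Ainit$ on the monitor storage persist for the rest of the transaction, as required.

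For $\Aterm$ the strategy is to have every $\Aend$ over-approximate the final verdict and write it into \<fail>. The key observation is that $A$'s regular storage and monitor storage are modified only by operations on $A$ (no other contract can touch them), so at the last $\Aend$ these values already coincide with their transaction-end values; the only datum $\Aterm$ can read that may still evolve afterwards is $A$'s balance. After the last $\Aend$, however, $A$'s balance can only \emph{decrease}, via pending operations with $\<src>=A$ that $A$ emitted earlier, because any operation that would increase $A$'s balance has $\<dest>=A$ and would trigger a further $\Aend$. Since only $A$'s code produces operations with $\<src>=A$, contract $A'$ can maintain in its storage the cumulative incoming money $\mathit{in}$ (updated at each inlined $\Abegin$) and the cumulative outgoing money $\mathit{out}$ (updated at each inlined $\Aend$). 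Together with $\mathit{orig}$ recorded at $\Ainit$ time, the quantity $\mathit{orig} + \mathit{in} - \mathit{out}$ evaluated at the last $\Aend$ equals $A$'s actual final balance. On every $\Aend$ I therefore evaluate the $\Aterm$ predicate against this predicted balance (together with the current storage and monitor storage) and overwrite \<fail> with the resulting boolean; the final value of \<fail> then reflects the predicate at the last $\Aend$, which agrees with $\Aterm$'s verdict at transaction end.

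The main obstacle will be packaging these observations into a formal simulation between the $\Rightarrow_a$ semantics of the monitored contract and the $\leadsto_a$ semantics of the synthesized one. The central lemma is the balance-prediction identity above, which relies on the fact that outgoing money with $\<src>=A$ is produced exclusively during operations on $A$. Given it, a trace-level induction yields that every successful monitored run of $A$ corresponds to a successful run of $A'$ with the same final state, and that $\Aterm$ makes the monitored transaction fail exactly when \<fail> is set at transaction end in the synthesized run. Corner cases---$A$ never being invoked (in which case $A'$ never touches \<fail> and never runs $\Ainit$), transactions that fail for reasons unrelated to $\Aterm$, and correctness for both $a = \<bfs>$ and $a = \<dfs>$---should follow routinely once the main lemma is in place.
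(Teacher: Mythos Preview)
Your proposal is correct and takes essentially the same approach as the paper: inline $\Ainit$ guarded by \first, and at the end of each method speculatively evaluate $\Aterm$, writing its verdict into \<fail>, with outgoing-money bookkeeping to predict the final balance. The paper's proof is terser---it compresses your $\mathit{orig}+\mathit{in}-\mathit{out}$ argument into a single remark that ``these calculations can be made in the monitor storage when the operations are generated''---but the construction is the same.
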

\begin{proof}
  Let \(\mathcal{B}\) be a blockchain that implements \<first> and
  \<fail>.
  Given a monitored smart contract \(A\), we want to implement \(A\) in
  blockchain \(\mathcal{B}\).
  We define a new smart contract \(A'\) extending its storage to also
  contains \(A\)'s monitor storage.
  Then, we equip $A'$ with a new method $f'$ for every method $f$ in
  $A$, such that, $f'$ first checks \<first> and executes
  \(A_{\<init>}\) if needed. Then, before exiting, $f'$ executes
  \(A_{\<term>}\) with the current state but instead of failing
  explicitly $f'$ set the failing bit.
  Function \(A_{\<init>}\) is executed exactly once and
  \(A_{\<term>}\) may be executed multiple times, but it does not
  modify the contract storage and it does not generate operations.
  The last execution of $\Aterm$ in \(A'\) will simulate $\Aterm$ in $A$.
  If the semantics of the blockchain were such that the balance of
  pending outgoing operation would subtract balance from $A$ when it executes,
  then these calculations can be made in the monitor storage when the operations
are generated.\qed
\end{proof}


Since \<ustore> implements \<first> and \<fail>, it follows that
\<ustore> implements transaction monitors.
\begin{corollary}\label{c:UStorageHookup:Monitors}
  \<ustore> implements transaction monitors but transaction monitors
  cannot implement \<ustore>.
\end{corollary}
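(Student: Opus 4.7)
The plan is to split the corollary into the positive implementation and the impossibility of the converse, handling them very differently.

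For the first half (that \<ustore> implements transaction monitors), I would chain the established results. Lemma~\ref{l:ush:bsh} gives that \<ustore> implements both \<bstore> and \<fail>. Theorem~\ref{th:first:count:tm:bsh} establishes the equivalence of \<bstore> and \<first>, so \<ustore> also implements \<first>. A blockchain equipped with \<ustore> therefore simulates \<first> and \<fail> simultaneously, and Theorem~\ref{thm:FstFnF:TMon} then yields transaction monitors.

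For the second half (that transaction monitors cannot implement \<ustore>), I would exhibit a contract writeable with \<ustore> whose externally observable behavior no transaction-monitored contract can reproduce. The witness is a contract $A$ with a main-storage counter $x$ (initially $0$), a no-op method $f$, and a getter $g$ returning $x$; its \<ustore> hookup simply executes $x := x + 1$. Its specification is that in any transaction in which $A$ is invoked at least once, $x$ increases by exactly one, independently of how many times $f$ was called. A purported transaction-monitored implementation $A'$ would have to induce the same externally observable history of $g$.

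The hard part, and where the bulk of the argument lies, is showing that no such $A'$ exists. I would invoke the definitions in Section~\ref{sec:Transaction:Monitors}: main storage in a transaction-monitored contract is writeable only by its regular operations, since \Ainit and \Abegin write only the monitor storage, \Aend writes only the monitor storage, and \Aterm cannot write at all (it may only fail). Moreover, regular operations receive only the context, parameters, storage, and balance as inputs — never the monitor storage — so no information computed by the monitor can influence an update of $x$. A two-execution comparison (a transaction calling $f$ once versus one calling it twice, from the same initial configuration) then forces a contradiction: any fixed update rule inside $f$ either leaves $x$ unchanged or increments it on every call, neither of which satisfies the exactly-once-per-transaction specification. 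This closes the information channel from monitor back to main storage that would be needed to match \<ustore>.
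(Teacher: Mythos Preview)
Your first half matches the paper exactly: it also just chains Lemma~\ref{l:ush:bsh} (so \<ustore> gives \<bstore> and \<fail>), Theorem~\ref{th:first:count:tm:bsh} (so \<bstore> gives \<first>), and Theorem~\ref{thm:FstFnF:TMon}.

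For the second half the paper is far terser than you: it simply records that transaction monitors ``can only make contracts fail but not change the storage'' and leaves it at that. Your concrete witness (a per-transaction counter implemented via the \<ustore> hookup) and your separation-of-channels observation---regular operations do not receive the monitor storage, while $\Ainit$/$\Abegin$/$\Aend$/$\Aterm$ cannot write the main storage---are exactly the content behind that sentence, so the approach is the same, just expanded.

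There is, however, a small gap in your final step. The ``two-execution comparison'' you describe (one call to $f$ versus two calls to $f$ from the same initial configuration) does not by itself force a contradiction, because $f$ is allowed to behave in a state-dependent way: e.g.\ ``increment $x$ iff $x$ is even'' makes both the one-call and the two-call transaction end at $x=1$. Your dichotomy ``either leaves $x$ unchanged or increments it on every call'' is therefore too coarse. The clean fix is a sequential comparison: from $S_0$ run a one-call transaction to reach $S_1$ (so $x$ increases by one), then from $S_1$ run another one-call transaction to reach $S_2$ (so $x$ increases by one again). Now run a single two-call transaction from $S_0$: since $f$ cannot read monitor storage, its first call sees exactly $S_0$ and must produce $S_1$, and its second call sees exactly $S_1$ and must produce $S_2$. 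Hence the two-call transaction increases $x$ by two, contradicting the specification. With this adjustment your argument is complete and is a fleshed-out version of the paper's one-line justification.
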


Transaction monitors can only make contracts fail but not change the
storage.
Our results are summarized in Fig.~\ref{fig:relations-monitors}.

\begin{figure*}[t!]
    \centering
    \includegraphics[scale=0.5]{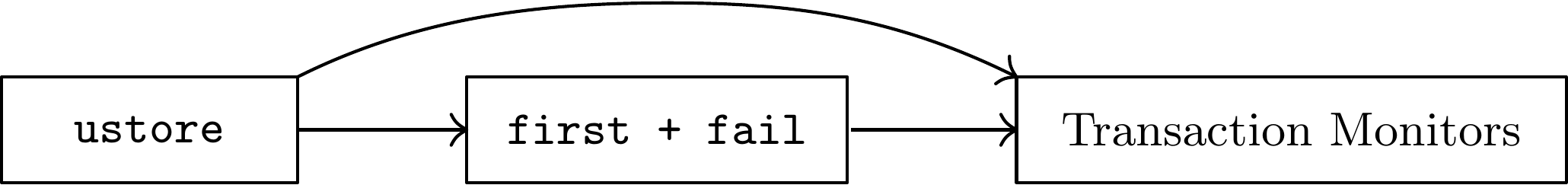}
    \caption{Relation between mechanisms and transaction monitor for any scheduler.}
    \label{fig:relations-monitors}
\end{figure*}

\subsection{BFS Blockchains} \label{s:results:BFSScheduler}

We now study more in detail the mechanism for BFS based blockchains.
The first result is that unless equipped with further mechanisms, BFS
blockchains cannot implement transaction monitors.
The essence of the proof is to create two transactions on a monitored contract
$A$ (like in ``only once'') in which corresponding invocations to the
same contract $A$ receive identical information, and one must fail and
the other commit.

\newcounter{thm-bfs-monitors}
\setcounter{thm-bfs-monitors}{\value{theorem}}

\begin{theorem}
  \label{thm:bfs:monitors}
  A BFS blockchain does not implement transaction monitors.
\end{theorem}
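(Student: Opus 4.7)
The plan is to derive a contradiction by exhibiting the ``only once'' transaction monitor from Section~\ref{sec:impossibility} as a witness that no BFS blockchain can implement. Recall this monitor attaches to a contract $A$ with $A_{\<init>}$ setting a counter to $0$, $A_{\<begin>}$ incrementing it, $A_{\<end>}$ a no-op, and $\Aterm$ failing iff the counter equals $1$. I will build two experiments that must produce different verdicts but are forced to coincide in BFS.

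First I would fix a concrete test harness. Let $B$ be an auxiliary stateless contract equipped with methods $B.f_k$ that emit exactly $k$ direct calls to $A$ with identical parameters and no currency transfer. I would then consider two experiments from a common initial configuration $(\Sigma_0,\Delta_0)$: the single transaction $\pi_3$ executing $B.f_3$, and the two consecutive transactions $\pi_{2+1}$ executing first $B.f_2$ and then $B.f_1$. By the monitor's specification, $\pi_3$ must commit; in $\pi_{2+1}$ the first transaction commits and the second must abort.

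The heart of the proof is an indistinguishability lemma for BFS. Under the BFS discipline, all three calls emitted by $B.f_3$ are dequeued before any of their sub-operations, so at the moment $A$'s third call in $\pi_3$ starts, the only piece of $\Sigma$ that has moved is $A$'s own storage, which reflects exactly two prior calls. The same is true of $A$'s unique call in the second transaction of $\pi_{2+1}$: after $B.f_2$ commits, $A$'s storage reflects two prior calls as well. Since $A$ receives the same $\<param>$, the same balance, and a context $\Delta$ with no field that the bare BFS model fills with transaction-scoped counters (we align block data across the two experiments), the deterministic function implementing $A$ must produce an identical post-state and an identical emitted list in both cases. By induction on the remaining queue, the same reasoning propagates through any bookkeeping that the implementation may have piggybacked onto $A$ or onto recurring check operations: they are pushed to the tail in matching patterns, encounter the same storage snapshots, and evolve identically.

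Consequently, whatever commit/abort verdict the implementation eventually produces is a deterministic function of information that coincides between $\pi_3$ and the second transaction of $\pi_{2+1}$; both must commit, or both must abort, contradicting the required verdicts. The main obstacle is exactly this last propagation step, since one must rule out that a clever BFS implementation smuggles transaction-scope information through the recurring-operation trick used in Section~\ref{sec:impossibility} to distinguish one from two invocations. The three-versus-(two-plus-one) comparison is designed precisely to defeat that trick: any recurring checks injected during $B.f_2$ are consumed while $\pi_{2+1}$'s first transaction is still running, so $B.f_1$ begins from a queue and chain state that is bisimilar to the state at $A$'s third call inside $\pi_3$, closing the argument.
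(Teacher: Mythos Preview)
Your overall strategy---use the ``only once'' monitor and compare a three-call transaction against a two-call transaction followed by a one-call transaction---matches the paper's. But the concrete harness you chose breaks the indistinguishability claim at the crucial step.

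In your $\pi_3$, the queue after $B.f_3$ is $[a_1,a_2,a_3]$, so under BFS $a_3$ executes \emph{before} any recurring bookkeeping operations that $a_1$ or $a_2$ may have emitted: when $a_3$ runs, $A$'s storage reflects only the effects of $a_1$ and $a_2$, while the checks $ls_1,ls_2$ emitted by those calls still sit behind $a_3$ in the queue. In the second transaction of $\pi_{2+1}$, by contrast, those same recurring checks have already been dequeued and executed to completion during the first transaction; the single call to $A$ now sees a storage that also reflects whatever $ls_1,ls_2,\ldots$ did. A candidate implementation can exploit exactly this difference---for example by having each check write ``last check saw $c$ calls'' into storage---so that the third call in $\pi_3$ and the fresh call in $\pi_{2+1}$ observe distinct storages and may legitimately diverge. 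Your final paragraph asserts the two states are bisimilar, but they are not: the recurring operations are consumed \emph{before} the comparison point in one experiment and \emph{after} it in the other.

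The paper's proof closes this gap by first forcing the shape of any correct implementation (it must spawn a recurring self-call that halts once a second invocation is observed), and then arranging the three-call transaction so that the third invocation is \emph{delayed}---via an auxiliary $B$ operation re-enqueued between the calls---until after the recurring operations from the first two calls have already halted. Only then does the third call truly see the same storage as a fresh call following a committed two-call transaction, and the contradiction goes through. Your $B.f_3$ emits all three calls up front, which is too eager; you need a construction that pushes the third call behind the bookkeeping.
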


A BFS blockchain guarantees that new operations are executed after all
pending operations, which enables the implementation of implement
\<fail> using recurring operations.
A recurring operation is a private function that can read and write
the storage and that either terminates or reinjects itself again to
the pending queue.
Since every time the operation is executed the blockchain consumes
gas, and eventually, failure follows from an attempt to inject itself
ad-infinitum.

\begin{lemma}\label{lmm:BFS:FailNoFail}
  Recurring operations in BFS blockchain allow to implement \<fail>.
\end{lemma}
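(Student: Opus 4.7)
The plan is to simulate the \<fail> bit by keeping a Boolean flag in the ordinary storage of each contract and using a single recurring \emph{monitor} operation to watch it. For a contract $A$ equipped with a fail bit, I would extend its storage with two Boolean fields, $\mathit{fl}$ and $\mathit{pending}$, both initially false. Every assignment of true to \<fail> is rewritten so that it sets $\mathit{fl}$ to true and, provided $\mathit{pending}$ is currently false, also emits a recurring operation $\mathit{check}_A$ and sets $\mathit{pending}$ to true; every assignment of false to \<fail> only updates $\mathit{fl}$. The recurring operation $\mathit{check}_A$ inspects $\mathit{fl}$: if it is true it reinjects itself into the pending queue; otherwise it resets $\mathit{pending}$ to false and terminates without reinjecting.

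Correctness will be established by relating a run of the original transaction under the \<fail> semantics to the corresponding simulated run, maintaining the invariant that $\mathit{pending}$ is true exactly when one copy of $\mathit{check}_A$ sits in the pending queue, and that this copy always lies strictly behind every non-recurring pending operation, since in BFS both newly generated operations and reinjections are appended at the tail. From this invariant I would split into two cases. If the original fail bit is false at the end of the transaction, then $\mathit{fl}$ is eventually false in the simulation while $\mathit{check}_A$ is still alive; when $\mathit{check}_A$ is finally popped it observes false, terminates, and the remaining computation proceeds identically to the original because $\mathit{check}_A$ only touches the auxiliary fields. If the original fail bit is true at the end, then $\mathit{fl}$ stays true from some moment on, so every time $\mathit{check}_A$ is popped it reinjects itself; once all non-recurring operations have been consumed only $\mathit{check}_A$ remains in the queue, and the simulation diverges until gas is exhausted, producing exactly the failing outcome prescribed by \<fail>.

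The step I expect to require most care is the queue-order invariant. The delicate aspect is ruling out that any operation spawned by a contract ever jumps ahead of $\mathit{check}_A$; this relies on two facts, namely that BFS strictly appends to the tail of the queue, and that the $\mathit{pending}$ guard prevents more than one copy of $\mathit{check}_A$ from ever coexisting in the queue even when \<fail> is toggled on, off, and on again. Together these guarantee that the simulation is observationally identical to the original model on every run that should succeed and diverges exactly when the original model would fail, which is what the lemma asks for.
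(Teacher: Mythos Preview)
Your construction is the paper's intended one: keep the fail bit in ordinary storage and watch it with a self-reinjecting private operation that exhausts gas whenever the bit stays set; the paper states the lemma immediately after describing exactly this recurring-operation pattern and gives no further detail. Your $\mathit{pending}$ flag, ensuring at most one live copy of $\mathit{check}_A$, is a reasonable refinement though not strictly necessary once gas is disregarded.

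The one genuine flaw is the queue-order invariant you single out as the delicate step: it is false. Take the queue $[\mathit{check}_A,\, o]$ with $\mathit{fl}$ true. First $\mathit{check}_A$ is popped and reinjects, yielding $[o,\, \mathit{check}_A]$; then $o$ runs and spawns a non-recurring $o'$, giving $[\mathit{check}_A,\, o']$. Now a non-recurring operation sits behind $\mathit{check}_A$, contradicting your claim that $\mathit{check}_A$ always lies strictly behind every non-recurring pending operation. Fortunately this invariant is also unnecessary. What actually carries the argument is the simpler fact that $\mathit{fl}=\mathit{true}$ implies $\mathit{pending}=\mathit{true}$ (hence a copy of $\mathit{check}_A$ is in the queue), which follows directly from your rewriting rules since $\mathit{pending}$ is cleared only when $\mathit{check}_A$ observes $\mathit{fl}=\mathit{false}$. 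Once every non-recurring operation has been consumed, either $\mathit{fl}$ is false and any surviving $\mathit{check}_A$ terminates at its next pop, or $\mathit{fl}$ is true and the lone remaining $\mathit{check}_A$ reinjects forever; no positional claim about where $\mathit{check}_A$ sits relative to other pending work is needed. Replace the invariant accordingly and the proof goes through.
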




Since transaction monitors cannot be implemented within a BFS
blockchain~(see Appendix~\ref{app:bfs:imp}), we conclude that \<fail>
does not implement transaction monitors in BFS blockchains.
The missing element is \<first> which allows to implement \<ustore>.
And, since \<ustore> implements transaction monitors
(Corollary~\ref{c:UStorageHookup:Monitors}), \<first> can
also implement transaction monitors. 

\begin{lemma}\label{l:BFSfirst:ush}
  Mechanism \<first> implements \<ustore> in BFS blockchains.
\end{lemma}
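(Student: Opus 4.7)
The plan is to combine the \<first> instruction with the recurring-operation trick of Lemma~\ref{lmm:BFS:FailNoFail} to inline an unbounded storage hookup $H_A$ into every invocation of a contract $A$, and to exploit BFS to postpone the final ``fail or commit'' verdict until after the last call to $A$.

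First I would transform $A$ into a contract $A'$ whose storage is augmented with two auxiliary fields: a copy $\mathit{sv}$ of the ``logical'' pre-hookup storage of $A$, and a Boolean flag $\mathit{fl}$. Each method $m$ of $A$ becomes a wrapper $m'$ that, when \<first> holds, initializes $\mathit{sv}\leftarrow \mathit{storage}$ and $\mathit{fl}\leftarrow \mathit{false}$, and otherwise restores $\mathit{storage}\leftarrow \mathit{sv}$ to undo the hookup applied at the end of the previous call. It then runs $m$'s original body on the restored storage, obtains $(s',\mathit{ls})$, sets $\mathit{sv}\leftarrow s'$, and tentatively evaluates $H_A(s')$: on success it installs the resulting state and clears $\mathit{fl}$, on failure it keeps $s'$ and sets $\mathit{fl}\leftarrow\mathit{true}$. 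Finally, $m'$ appends a recurring operation $R$ to $\mathit{ls}$; $R$ reads $\mathit{fl}$ and reinjects itself precisely when $\mathit{fl}$ is true, exactly as in the construction of Lemma~\ref{lmm:BFS:FailNoFail}.

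The correctness argument will then proceed in two steps. I would first show by induction on the transaction trace that (i) each call to $m'$ runs $m$'s body on exactly the storage that $A$ alone would see, so the sequence of externally emitted operations matches the unmonitored execution, and (ii) after every pause the actual storage of $A'$ equals $H_A(\mathit{sv})$ when the last hookup succeeded and $\mathit{sv}$ otherwise. I would then use the BFS tail-insertion rule to show that the $R$ emitted by the very last $A$-call is scheduled strictly after every remaining operation that can still reach $A$, so it observes the final value of $\mathit{fl}$; applying Lemma~\ref{lmm:BFS:FailNoFail} then aborts the transaction by gas exhaustion exactly when $H_A$ would have failed and lets it commit with the correct post-hookup storage otherwise.

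The hard part will be establishing the undo-redo invariant behind step~2: I must verify by induction on the trace that restoring $\mathit{sv}$ really exposes $m$ to the same storage it would see in an unmonitored run, i.e.\ that the pair $(\mathit{storage},\mathit{sv})$ stays in the claimed relation along every reachable configuration. A related subtlety I expect to have to address is that the transient intermediate hookup states must remain externally invisible, which relies on the model's assumption that other contracts interact with $A$ only through calls intercepted by $A'$, and therefore always see the restored logical storage rather than an in-flight post-hookup one.
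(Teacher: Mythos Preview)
Your proposal is correct and follows essentially the same construction as the paper: use \<first> to detect the opening call, maintain a side copy of the storage to separate the ``logical'' state from the tentatively hooked-up one, and rely on a recurring operation in the sense of Lemma~\ref{lmm:BFS:FailNoFail} to force failure by gas exhaustion when the hookup would fail after the last call to $A$. The only cosmetic difference is that the paper keeps the main storage as the logical state and writes the hookup result into the side copy $S'$ (deferring its installation to the first call of the \emph{next} transaction), whereas you swap the roles and undo/redo the hookup at each call; in the paper's model, where other contracts interact with $A$ only through invocations, the two arrangements are equivalent.
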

\begin{proof}
  Assume a BFS blockchain implementing \first.
  Let $A$ be a smart contract.
  We modify $A$ to contain a second copy $S'$ of its storage.
  Upon the first call of \(A\), we update the current storage using
  the values in \(S'\).
  We add a new private method $\<hookup>$ in $A$ that mimics the code of
  \<ustore> but (1) it applies the changes in $S'$, and (2) instead of
  failing (if \<ustore> fails) it calls itself as a recurring
  operation.
  Finally, we modify $A$ so that function $\<hookup>$ is invoked at
  the end of each method in $A$.
  In effect, $\<hookup>$ is preventively evaluating \<ustore> on the
  side memory $S'$, and simulating the failure as a recurring
  operation (when \<ustore> fails).
  Therefore, if the operation is the last one on the contract and
  it does not fail, then $S'$ contains the correct storage, which will
  then be copied at the beginning of the next transaction.\qed
\end{proof}
In the previous proof, we split mechanism \<ustore> into two parts:
one in charge of updating the storage, the other in charge of failing.
If we also add \<queue>, we can implement \<ustore> without failing by
gas exhaustion because now the $\<hookup>$ executed recurrently can
know if there are only recurrent operations and then execute the
\<ustore> code (including the failure).

\begin{lemma}\label{l:BFSqueueinfo:ush}
  Mechanism \<queue> implements \<ustore> in BFS
  blockchains.
\end{lemma}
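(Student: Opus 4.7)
The plan is to reuse the recurring-operation pattern from Lemma~\ref{l:BFSfirst:ush} but exploit \<queue> to detect when the last non-recurrent interaction with $A$ has ended; at that moment we execute the body of \<ustore> directly, obtaining a genuine failure instead of a gas-exhaustion failure and matching the semantics of \<ustore> without needing a side copy of the storage.

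Given a contract $A$ equipped with \<ustore>, I would extend $A$'s storage with a single Boolean field $\mathit{sched}$ (initially false) and introduce a private recurring operation $\<hookup>$. Each method $f$ of $A$ is instrumented so that, after producing its new storage and list of emitted operations, it checks $\mathit{sched}$: if false, it sets $\mathit{sched}$ to true and appends a call to $\<hookup>$ at the tail of the emitted list. The body of $\<hookup>$ first clears $\mathit{sched}$ back to false and then evaluates \<queue>: if \<queue> reports that only recurrent operations remain, $\<hookup>$ executes the body of $A$'s \<ustore> verbatim against the current storage, which either succeeds (updating storage) or fails and aborts the transaction; otherwise, $\<hookup>$ re-injects itself as a recurrent operation and sets $\mathit{sched}$ back to true.

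The main verifications I would carry out are: (i) the invariant ``$\<hookup>$ is pending in the queue iff $\mathit{sched}$ is true'' is established at transaction start and preserved by every rule, so at most one $\<hookup>$ is ever pending and \<ustore>'s body fires at most once; (ii) in BFS all non-recurrent operations eventually drain from the queue, at which point \<queue> returns true, and since recurrent operations may only emit intra-contract operations, no further cross-contract calls to $A$ can arise, so the pending $\<hookup>$ fires exactly once in the intended ``end of $A$'s local activity'' state; and (iii) a successful transaction leaves $\mathit{sched}$ false (the terminal $\<hookup>$ did not reschedule), while failed transactions revert storage, so every subsequent transaction starts cleanly. The main obstacle I anticipate is precisely point (iii): without \<first> there is no explicit ``per-transaction reset'' hook, so the construction must be designed so that termination of the hookup chain itself restores the invariant, which is exactly what the $\mathit{sched}$ flag combined with the monotonicity of ``\<queue> returns true'' under BFS accomplishes.
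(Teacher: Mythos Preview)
Your proposal is correct and follows the same core idea the paper sketches just before the lemma: run a recurring \<hookup> under BFS and, once \<queue> reports that only recurrent operations remain, execute the \<ustore> body directly on the real storage (so failure is genuine rather than by gas exhaustion, and no side copy $S'$ is needed). The paper does not spell out the at-most-once bookkeeping; your $\mathit{sched}$ flag and the invariant in point~(iii) are a clean way to fill that gap, and your observation that the terminal \<hookup> itself resets $\mathit{sched}$---so no \<first>-style per-transaction reset is required---is exactly the right justification.
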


In a BFS blockchain, \<ustore> implements transaction monitors
(Corollary~\ref{c:UStorageHookup:Monitors}), and thus, by the previous
lemma, \<queue> also implements transaction monitors.
Next, we will show that \<queue> cannot be implemented with
\<ustore> when a BFS strategy is used.
Intuitively, mechanism \<queue> adds a way for smart contracts to know
the state of the blockchain, i.e. if there is still interaction
between smart contracts, and thus, smart contracts can take different
actions based on the state of the blockchain, while mechanism
\<ustore> adds a way to execute a procedure at the end of
transactions, but smart contracts are oblivious about interactions
between smart contracts.
Since \<ustore> implements all other mechanisms,
we have that no other mechanism can implement \<queue>.

\newcounter{lemma-bfs-queue}
\setcounter{lemma-bfs-queue}{\value{lemma}}
  
\begin{lemma}
  \label{l:bfs:queue}
  In BFS blockchains \<ustore> cannot implement \<queue>.
\end{lemma}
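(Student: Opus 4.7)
The plan is to mirror the counter-example strategy from Section~\ref{sec:impossibility}: I will exhibit a contract $A$ that uses \<queue> and two external operations whose traces $A$ must distinguish, and then argue that any alleged \<ustore>-based simulation $A'$ is blind to the distinction.

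First I would define the witness contract. Let $A$ have a single method $f$ whose body evaluates \<queue> and fails iff the pending queue still contains a non-recurrent operation (and otherwise terminates with no effect). I then introduce a helper contract $C$ with two methods: $g_1$, which emits a single operation $o_A$ invoking $A.f$ with fixed parameters; and $g_2$, which emits the list $[o_A, o_B]$ where $o_A$ is exactly as in $g_1$ and $o_B$ targets a fresh contract $B$ chosen so that $B$ never calls $A$ and does not touch anything $A$ can observe. Let $t_1$ be the external invocation of $C.g_1$ and $t_2$ that of $C.g_2$, both issued from the same initial configuration $(\Sigma,\Delta)$. Under BFS with \<queue>, when $A.f$ runs inside $t_1$ the pending queue is empty so $\<queue>$ returns true and $t_1$ commits; inside $t_2$, $o_B$ is still pending when $A.f$ runs, so $\<queue>$ returns false and $A.f$ fails, aborting $t_2$. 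Hence $A$ realises the transaction-outcome function ``commit on $t_1$, fail on $t_2$''.

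Next I would assume, toward a contradiction, that some contract $A'$ built on \<ustore> (in a BFS blockchain otherwise equipped only with the features already present) simulates $A$, and show that $A'$ cannot realise the same outcome function. The key observation is an indistinguishability lemma: at the moment $A'.f'$ is invoked in $t_1$ and in $t_2$, it sees the same caller, the same parameters (by construction of $g_1, g_2$), the same contract storage (since it is the first call to $A'$ in either transaction and $\Sigma$ is the same), and the same portion of $\Delta$ that the semantics of the base blockchain lets a smart contract read. Since smart contracts are deterministic pure functions of these inputs, $A'.f'$ produces the same new storage, the same emitted operation list, and the same success/failure status in both transactions. If it fails, $t_1$ aborts, contradicting the simulation. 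So $A'.f'$ succeeds identically, and the two transactions proceed; $B$'s execution in $t_2$ does not touch $A'$'s storage, so $A'$'s storage and balance just before the \<ustore> hookup are identical in both runs. The hookup is itself a deterministic function of that local state, hence it either fails in both transactions or in neither — in either case $t_1$ and $t_2$ agree, contradicting the required outcome function.

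The main obstacle, and the step that needs the most care, is pinning down precisely which components of the blockchain context $\Delta$ a contract under \<ustore> is allowed to observe, and verifying that nothing in $\Delta$ betrays the presence or absence of $o_B$ in the pending queue. Without \<queue>, the BFS semantics in Section~\ref{sec:modelOfComputation} gives contracts no primitive to inspect the pending queue, and the remaining blockchain-context fields (block number, timestamp, caller, money, etc.) are fixed by the externally visible shape of the transaction, which we have engineered to be identical in both cases up to $A'$'s local view. Once this is made explicit, the indistinguishability lemma goes through and the contradiction is immediate, establishing that \<ustore> cannot implement \<queue> in a BFS blockchain.\qed
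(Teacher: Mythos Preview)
Your argument is correct and is essentially the paper's own proof: both exhibit a contract that fails iff \<queue> is false, set up two transactions that differ only in whether an innocuous extra operation is pending when $A$ runs, and then argue that without access to the pending queue the single invocation of $A'$---and hence its \<ustore> hookup---must behave identically in both. The only cosmetic differences are the naming of the helper contracts and which of the two transactions is the failing one.
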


The main idea is to create two executions that are identical unless one can
inspect the pending operation queue, and in which one operation must fail if
\<queue> returns that the queue of pending operations is empty.
The complete proof is in Appendix~\ref{app:bfs:imp}.
Fig~\ref{fig:relations-bfs} summarizes the relations between
mechanisms and transaction monitor in BFS blockchains.

\begin{figure*}[t!]
  \centering
  \includegraphics[scale=0.5]{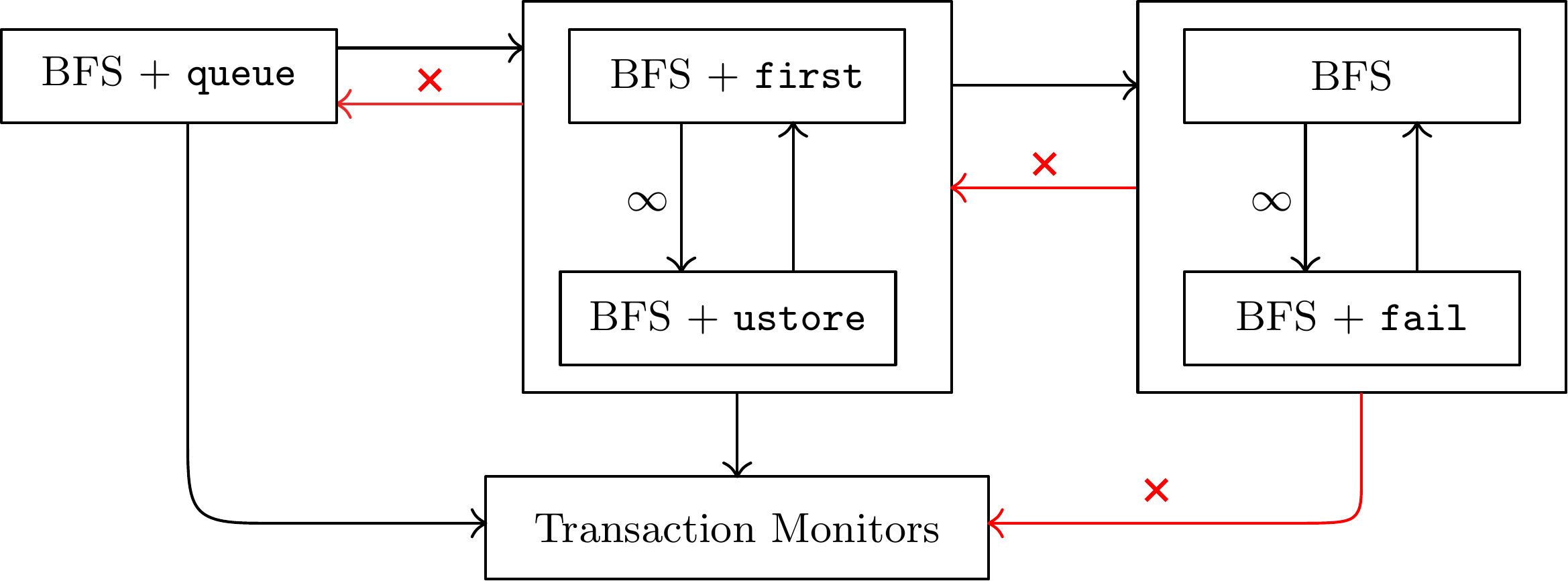}
  \caption{Relation between mechanisms and transaction monitor in BFS blockchains.
A black arrow from mechanism $M$ to mechanism $N$ means that $M$ can
implement $N$.
The \(\infty\) symbol represents the use of an infinite recursion to provoke a failure.
A red arrow with a cross from mechanism $M$ to mechanism $N$ means
that $M$ cannot implement $N$.  }
    \label{fig:relations-bfs}
\end{figure*}

\subsection{DFS Blockchains}\label{s:results:DFSScheduler}

We now study DFS blockchains, that is, when the resulting list of
operations from smart contracts execution are appended at the beginning
of the list.
This is the most conventional execution order in most blockchains, like
Ethereum.
We now prove several impossibility results.


Mechanisms \<ustore> and \<first> plus \<fail> implement transaction
monitors (Corollary~\ref{c:UStorageHookup:Monitors} and
Thoerem~\ref{thm:FstFnF:TMon}).
In a DFS blockchain, those are the only two ways using our mechanisms
to implement transaction monitors.
We show that transaction monitors cannot be implemented by combining
\<queue> with either \<first> or \<fail>, and as a consequence none of
these mechanisms on their own can implement transaction monitors.

\newcounter{lem-dfs-queue-first}
\setcounter{lem-dfs-queue-first}{\value{lemma}}

\begin{lemma}
  \label{lem:dfs-queue-first-nomon}
  A DFS blockchain implementing \<queue> and \<first> does no
  implement transaction monitors.
\end{lemma}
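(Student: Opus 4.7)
\emph{Plan.} I will adapt the ``only-once'' counter-example from Section~\ref{sec:impossibility} to the richer setting with \<first> and \<queue>. The target specification is the transaction monitor for a contract $A$ whose monitor storage is an integer reset to $0$ by $\<init>$, incremented inside $\<begin>$, and such that $\<term>$ aborts the transaction iff the counter equals one. Suppose, towards contradiction, that some DFS instrumentation of $A$, using only \<first>, \<queue> and arbitrary self-emitted recurrent or non-recurrent operations, correctly realises this monitor.

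To build the distinguishing pair of transactions, I use two harmless auxiliary contracts per run. In $T_1$ an external operation $o_B$ triggers a method in $B$ that emits $[o_A, o_D]$, where $D$'s invoked method has an empty body. In $T_2$ the external operation $o_{B'}$ triggers a method in $B'$ that emits the list $[o_A, o_{D'}]$, identical in shape, but $D'$'s body additionally emits a call $o_A$. Under DFS the pending queue after $o_B$ in $T_1$ is $[o_A, o_D]$ and after $o_{B'}$ in $T_2$ it is $[o_A, o_{D'}]$, so $A$ is reached exactly once in $T_1$ and exactly twice in $T_2$; the correct monitor must therefore abort $T_1$ and accept $T_2$.

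The crux is that the first pop of $o_A$ happens under indistinguishable circumstances in both runs: the blockchain storage of all reachable contracts, the parameters carried by $o_A$, and $A$'s monitor storage coincide; \<first> returns true; and \<queue> returns false in both, because the only pending operation ($o_D$ or $o_{D'}$) targets a different contract and is not a recurrent operation. Since $A$'s instrumented code is deterministic, the first execution of $o_A$ emits exactly the same list $ls$, performs the same monitor-storage update, and takes the same fail/no-fail decision in both transactions. Under DFS, $ls$ is prepended to the queue and is therefore fully consumed before $o_D$ or $o_{D'}$ can ever be popped. An induction on the length of this common prefix shows that throughout its processing every reachable state, every value of \<first> and every value of \<queue> seen by $A$'s instrumentation remain equal in $T_1$ and $T_2$ (any recurrent operation emitted along the way again sits on top of a queue whose bottom is still $o_D$ or $o_{D'}$, so \<queue> stays false), hence every decision the instrumentation takes inside the common prefix must agree in the two runs.

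This makes the contradiction immediate. If the monitor fails anywhere inside the common prefix, both transactions abort, violating that $T_2$ should commit. If it never fails during the common prefix, what remains in $T_1$ is just $o_D$ followed by queue exhaustion, so $T_1$ commits, violating that it should abort; in $T_2$ the suffix contains the second $o_A$ where the instrumentation can observe that \<first> now returns false and react correctly. The only residual loophole is for the instrumentation to keep re-injecting recurrent operations until gas is exhausted, but such a loop lives entirely in the common prefix and therefore fails both transactions symmetrically, again breaking the specification on $T_2$. Establishing the invariance of the common prefix is the main technical hurdle; once that induction is in place, the case analysis above completes the proof.
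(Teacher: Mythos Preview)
Your proof follows essentially the same strategy as the paper: use the ``only once'' monitor for $A$, build two transactions in which the first execution of $A$ sees indistinguishable configurations (with \<first> returning true and \<queue> forced to false by a pending non-recurrent call to a third contract), and derive a contradiction from determinism. The paper's construction is slightly simpler---a single $B$ emits $[a_1,c_1]$ versus $[a_1,a_2,c_1]$, so the extra $A$-call is already in the queue rather than hidden behind your $D'$---and the paper does not spell out the common-prefix induction, but the two arguments are morally the same; your explicit induction over the DFS prefix is a welcome clarification of what the paper leaves implicit.

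One technicality to tighten: you use distinct callers $B$ and $B'$ (and $D$, $D'$) across the two transactions. In the paper's model an operation carries a \<src> field and the blockchain context $\Delta$ may expose caller information, so if $B\neq B'$ the two instances of $o_A$ are not literally identical and the instrumented $A$ could in principle branch on the sender, breaking your indistinguishability claim. The repair is immediate---use a single $B$ with two methods, exactly as the paper does---but as stated your ``parameters carried by $o_A$ coincide'' does not yet cover \<src>.
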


\begin{proof}
  Let \(\mathcal{B}\) be a DFS blockchain and \(A\) a smart contract installed
  in \(\mathcal{B}\).
  %
  Consider the ``only once'' monitor that fails if and only if the smart
  contract $A$ is called exactly once.
  We show that this monitor cannot be implemented in DFS even with
  \first and \<queue>.
  
  Let \(B,C\) be two other smart contracts.
  We analyze the pending queue of execution of two possible external operations
  originated by \(B\):
  \begin{compactenum}
  \item \(o_{1}\) where $B$ calls $A$ \emph{once} and then $C$
  \item \(o_{2}\) where $B$ calls $A$ \emph{twice} and then $C$
  \end{compactenum}
  We assume that there are no additional invocations to \(A\) aside from the
  described above.
  When we execute both operations in a \((\Sigma,\Delta)\) blockchain system,
  we have the following two traces:
  \[
    \begin{array}{rcl}
      \bullet & t_{1} : & (\Sigma,\Delta, [o_{1}]) \leadsto_{\<dfs>} (\Sigma',\Delta', [a_{1},c_{1}]) \ldots \\
      \bullet & t_{2} : & (\Sigma,\Delta, [o_{2}]) \leadsto_{\<dfs>} (\Sigma',\Delta', [a_{1},a_{2},c_{1}]) \ldots
    \end{array}
  \]
Note that the presence of operation \(c_{1}\) in the pending execution queue is
forcing mechanism \<queue> to return false.
Since the occurrence of operation \(a_{1}\) in both cases execute in
the same configuration, the behavior must be the same.
The transaction executing \(o_{1}\) must fail because $A$ is called
only once, but this will make the second transaction fail as well.\qed
\end{proof}
We can conclude that neither \<queue> nor \first alone would implement
transaction monitors.

\newcounter{lem-dfs-fail-tmons}
\setcounter{lem-dfs-fail-tmons}{\value{lemma}}

\begin{lemma}
  \label{lmm:DFS:QInfoFnF:TMons}
  Under DFS \<queue> and \<fail> cannot implement transaction
  monitors.
\end{lemma}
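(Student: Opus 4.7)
The plan is to exhibit a transaction monitor that no smart contract can realise on a DFS blockchain equipped only with \<queue> and \<fail>. I would again pick the ``only once'' monitor (which fails iff $A$ is invoked exactly once in the transaction), the canonical obstruction already used in Section~\ref{sec:impossibility} and in Lemma~\ref{lem:dfs-queue-first-nomon}. Assume for contradiction that some contract $A$ implements it. Since $A$'s method is deterministic, its behaviour at each invocation is a function of the observable input: $A$'s current storage $s$, the boolean returned by \<queue>, and the call arguments (which I keep fixed). Let $g(s)$ be the storage left by a single invocation of $A$ from state $s$ at which \<queue> returned \textbf{false}, and let $\phi_F(s)$ and $\phi_T(s)$ denote the value to which \<fail> is assigned on a call from $s$ observing \textbf{false} and \textbf{true} respectively. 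Because \<fail> is a per-transaction bit and the last call to $A$ in the transaction overwrites any previous assignment, the verdict of any transaction is determined by $\phi_F$ or $\phi_T$ applied to the storage state reached at $A$'s last invocation.

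The core construction uses two auxiliary contracts $B$ and $C$, where $C$ never touches $A$, and exploits the DFS-specific fact that a pending non-recurrent operation (for instance a call to $C$) forces \<queue> to return \textbf{false} at every subsequent invocation of $A$. From the initial storage $s_0$ I would consider three external operations: $o_{\text{mix}}$, in which $B$ emits $[A,A,C]$ (so both $A$-calls observe \textbf{false}); $o_{\text{one}}$, in which $B$ emits $[A]$ (the single $A$-call observes \textbf{true}); and $o_{\text{three}}$, in which $B$ emits $[A,A,A]$ (two calls observe \textbf{false} and the last observes \textbf{true}).

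Running $o_{\text{mix}}$ from $s_0$ must succeed because the monitor accepts two calls, and it deterministically leaves $A$'s storage at $g^2(s_0)$. Next, running $o_{\text{one}}$ from $g^2(s_0)$ must fail because the monitor rejects a single call, and since $A$'s only opportunity to set \<fail> in that transaction is its one invocation, this forces $\phi_T(g^2(s_0))$ to equal \textbf{true}. On the other hand, $o_{\text{three}}$ run from $s_0$ must succeed (three calls), and the storage trajectory of its first two invocations is identical to that of $o_{\text{mix}}$, so its last invocation sees exactly the same storage $g^2(s_0)$ and the same reading \textbf{true} of \<queue>; this is the only invocation in $o_{\text{three}}$ that is able to determine the final \<fail>, forcing $\phi_T(g^2(s_0))$ to equal \textbf{false}. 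Together these contradict each other.

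The main obstacle is making sure the two critical invocations---the single call of $o_{\text{one}}$ and the last call of $o_{\text{three}}$---truly look identical to $A$: the storage state must coincide and the \<queue> reading must match. The deterministic evolution of $A$'s storage under identical observations, together with the careful choice of placing (or omitting) the dummy $C$-call to control \<queue>, is precisely what delivers this coincidence under DFS. A secondary subtlety worth acknowledging is that $A$ should not be able to surreptitiously distinguish the two moments through the blockchain context $\Delta$; as is standard in the paper's impossibility arguments, I would assume the two scenarios can be instantiated with matching $\Delta$ at the critical call.
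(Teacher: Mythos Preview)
Your argument is correct and uses the same core idea as the paper---the ``only once'' monitor combined with an indistinguishability argument comparing a three-call transaction against a two-then-one sequence---but your instantiation differs. The paper keeps a pending call to $C$ throughout so that every $A$-invocation observes \<queue> $=$ false, and it manufactures the three-call transaction by starting a single transaction from the pair $[o_2;o_1]$; you instead design $o_{\text{three}}$ so that the decisive invocation observes \<queue> $=$ true and match it against the sole call of $o_{\text{one}}$ run after $o_{\text{mix}}$. Your route has the advantage of staying within ordinary single-external-operation transactions. One point worth tightening: your $g$ and $\phi$ are written as if an $A$-invocation were atomic, but under DFS $A$ may emit descendant operations; the argument still goes through once you read $g$ and $\phi$ as the net effect of the entire subtree and verify (i) that descendants in the first two subtrees of $o_{\text{mix}}$ and $o_{\text{three}}$ observe the same \<queue> value (they do---a non-recurrent operation is always pending), and (ii) that the \<fail> bit is false on entry to the decisive subtree in both scenarios (this follows from $o_{\text{mix}}$ succeeding).
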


The main difference between these mechanisms and transaction
monitors is that the latter can execute functions without a 
contract being invoked at particular moments in the execution of transactions.
Take for example procedure \(\<init>\), neither \<queue> nor \<fail>
can simulate \(\<init>\), as there is no way for these mechanisms to
distinguish the first execution of a smart contract in a given
transaction.

Combining \<fail> with \<first> one can implement transaction monitors
in any execution order, including DFS (Theorem~\ref{thm:FstFnF:TMon}),
but \<fail> is not enough to implement transaction monitors in DFS.
Therefore, we conclude that DFS blockchains do not implement \<first>.
Moreover, putting all previous lemmas together, we conclude that a
DFS blockchains cannot implement any of the mechanisms listed in
Section~\ref{sec:exec:mechanisms} directly.

\begin{corollary}
  DFS blockchains cannot implement \<first>, \<fail>, \<ustore> or \<queue>.
\end{corollary}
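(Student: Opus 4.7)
The plan is to discharge each of the four non-implementability claims separately, reducing three of them to contradictions against already-established DFS impossibility results and handling the fourth by a small direct construction. The common template will invoke Theorem~\ref{thm:FstFnF:TMon}, Corollary~\ref{c:UStorageHookup:Monitors}, and the two DFS-specific impossibility lemmas (Lemmas~\ref{lem:dfs-queue-first-nomon} and~\ref{lmm:DFS:QInfoFnF:TMons}), together with the monotonicity observation that adding mechanisms to a blockchain can only widen the class of writable contracts.

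For \<ustore> the argument is immediate: Corollary~\ref{c:UStorageHookup:Monitors} already asserts that \<ustore> implements transaction monitors, so a DFS blockchain that implemented \<ustore> would itself implement transaction monitors, contradicting the DFS impossibility established in Section~\ref{sec:impossibility} via the ``only once'' counter-example. For \<first> I would assume toward contradiction that DFS implements \<first>; then a DFS blockchain additionally equipped with \<fail> would implement both \<first> and \<fail>, and hence by Theorem~\ref{thm:FstFnF:TMon} would implement transaction monitors. But DFS~$+$~\<fail> is no more expressive than DFS~$+$~\<queue>~$+$~\<fail>, which by Lemma~\ref{lmm:DFS:QInfoFnF:TMons} does not implement transaction monitors, a contradiction. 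The case of \<fail> is symmetric, now using Lemma~\ref{lem:dfs-queue-first-nomon} together with the observation that DFS~$+$~\<first> is no more expressive than DFS~$+$~\<queue>~$+$~\<first>.

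For \<queue> I would argue directly using the operational model of Section~\ref{sec:modelOfComputation}, because none of the previous algebraic reductions seems to yield a contradiction here. The key observation is that a DFS smart contract is a deterministic function of the tuple $(\Delta, o.\<param>, \Sigma(o.\<dest>).st, \Sigma(o.\<dest>).balance)$ and has no access to the pending operation queue. I would construct two transactions that arrive at the same destination contract $A$ with identical arguments, storage, balance, and blockchain context, but whose pending queues differ: for instance a caller $B$ whose body is just ``invoke $A.op$'' versus a caller $B'$ whose body is ``invoke $A.op$, then invoke $C.g$''. In DFS, when $A.op$ is about to run under $B'$ the pending queue contains the non-recurring operation $C.g$, while under $B$ it is empty, so any faithful simulation of \<queue> inside $A.op$ must return different values in the two runs. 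Yet by determinism the code of $A$ in a bare DFS blockchain sees exactly the same inputs in both runs and must behave identically, contradicting any purported simulation of \<queue>.

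The hard part will be the \<queue> case: I need to make sure the two transactions genuinely coincide on every observable input of $A.op$, and that the specific carve-out for ``recurring operations'' built into the semantics of \<queue> does not accidentally make the two scenarios indistinguishable from \<queue>'s own perspective. The remaining three cases reduce to short contradictions once the subset-of-mechanisms monotonicity is invoked, and do not introduce any new difficulties beyond careful citation of the previous lemmas.
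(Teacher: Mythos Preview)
Your proposal is correct and mirrors the paper's own argument: the paper derives the \<first>, \<fail>, and \<ustore> cases by exactly the contradictions you describe (via Theorem~\ref{thm:FstFnF:TMon}, Corollary~\ref{c:UStorageHookup:Monitors}, Lemmas~\ref{lem:dfs-queue-first-nomon} and~\ref{lmm:DFS:QInfoFnF:TMons}, plus the monotonicity of adding mechanisms), and handles \<queue> by a separate direct indistinguishability argument (Lemma~\ref{lmm:DFS:Not:QInfo} in the appendix) that is essentially the two-transaction construction you sketch with $B$ versus $B'$ and the extra call to $C$.
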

All proofs are in Appendix~\ref{app:dfs}.



\section{Conclusion and Future Work}\label{sec:discussion}

We have studied transaction monitors for smart contracts.
Transaction monitors are a defense mechanism enabling smart contracts to
explicitly state wanted or unwanted behaviour at the transactional level.
This kind of properties are motivated by contracts like flash loans,
which are not implementable in their full generality in current
blockchains.
We propose a solution based on adding new mechanisms to the blockchain.
Transaction monitors can be incorporated directly into contracts or
simulated if some of these mechanisms are implemented.
This could be preferable since some of these mechanisms are very simple
and backward compatible, while others extend the functionality of
smart contracts.
We have studied how some mechanisms simulate each other, both for any
execution order, and specifically for BFS and DFS blockchains.
The conclusion is that the simplest mechanism that allows us to implement
transaction monitors is the combination of \<first> and \<fail>.

For simplicity, we have neglected a specific analysis of gas
consumption, except for recurrent operations that purposefully fail by
exhausting gas.
Even though transaction monitors will consume additional gas which can
influence the failure of the transaction (as with operation monitors),
we claim that for all our development there is an amount of gas that
can be calculated which will not make accepting transactions fail.
However, we leave a detailed study for future work.

Other avenues of future work include the study of new features,
particularly \emph{views} that allows contracts to inspect the state
of other contracts.
We are also performing a thorough study of how exposing new mechanisms
to contracts---that can use them for implementing functionality---can
break (or not) implementations of monitors that are correct without
adding the mechanisms.

\vfill\pagebreak

\bibliographystyle{abbrv}
\bibliography{bibfile}

\vfill\pagebreak

\newcounter{backup}
\appendix
\section{Mechanisms Equivalence}\label{app:equiv}

\begin{lemma}
    Mechanisms \first and \countmech are equivalent.
\end{lemma}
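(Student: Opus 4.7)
The plan is to prove the two implementations separately, since equivalence is by definition a conjunction of two implementation results. Both directions should be conceptually easy, so the proof will mostly be a matter of describing the simulations carefully.

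For the direction that \countmech implements \first, I would observe that \first is simply a Boolean test about whether the current invocation is the first one. If we fix the convention that \countmech returns the number of invocations \emph{including} the current one, then within any method of contract $A$ we can implement \first by returning the expression $\countmech = 1$; under the alternative convention (not counting the current call), the expression is $\countmech = 0$. No additional storage or context is required; the simulation is a one-line rewrite performed statically on the contract source.

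For the direction that \first implements \countmech, I would add a single numeric field $c$ to the contract's persistent storage of $A$. At the entry of every method of $A$, I would insert a prologue that inspects \first: if \first holds, set $c := 1$; otherwise set $c := c + 1$. Every subsequent use of \countmech inside the method is then rewritten to read $c$. Because the counter is reset on the first invocation of the contract in each transaction, its value correctly reflects the number of invocations that have occurred up to and including the present one, regardless of whether prior transactions ever touched $A$. This simulation does not emit any new operations and does not inspect storage of other contracts, so it respects the restrictions on both mechanisms.

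The only subtle point I anticipate is bookkeeping about the counting convention (whether the current invocation is counted), which must be fixed consistently and matched on both sides of the reduction; beyond that, no real obstacle arises. I would finish by noting that the two simulations together establish that any contract writable using one mechanism is writable using the other, which by the definition given just before Theorem~\ref{th:first:count:tm:bsh} is exactly the equivalence claim.
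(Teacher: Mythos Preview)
Your proposal is correct and follows essentially the same approach as the paper: implement \first by testing whether the count is zero (or one, depending on convention), and implement \countmech by adding a persistent counter that is reset to one when \first holds and incremented otherwise. Your treatment is in fact slightly more careful than the paper's, since you explicitly flag the counting-convention ambiguity that the paper leaves implicit.
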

\begin{proof}
Assume \countmech, we implement \first by checking that its value is 0.

Assume \first.
For each smart contract \(A\) installed in the blockchain, we extend the storage
of \(A\) with a natural variable \(A_{\<count>}\), and we replace each method in
\(A\) with a new one where we first check \first: if it is true set
\(A_{\<count>} = 1\), if it not, add one to \(A_{\<count>}\).
\end{proof}

\begin{lemma}
    Mechanisms \first and \opmem are equivalent.
\end{lemma}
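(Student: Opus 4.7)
The plan is to give two reductions, following the same template as the preceding lemma relating \first and \countmech. For the direction where \opmem implements \first, I would extend each contract $A$'s transaction-memory segment with a single boolean field $f_A$ whose initializer sets it to true. Each method of $A$ is then rewritten so that its first action is to read $f_A$ into a local variable, and then unconditionally set $f_A$ to false; every occurrence of \first inside the original body is replaced by the saved local. By the semantics of \opmem, the flag $f_A$ is (re)initialized to true exactly once per transaction, at the very start, so the saved value is true on the first invocation of $A$ in the transaction and false on every subsequent invocation---precisely the specification of \first.

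For the converse direction, given a contract $A$ carrying a transaction-memory segment $T_A$ with programmer-supplied initializer $\iota_A$, I would relocate $T_A$ into $A$'s ordinary persistent storage and compile $\iota_A$ into a private method $\mathit{reset}_A$. Every method $m$ of $A$ is then rewritten as ``if \first{} then $\mathit{reset}_A()$; original body of $m$'', and all reads and writes to $T_A$ in the original code are reinterpreted as reads and writes on the now-persistent copy. The transformation uses only \first{} and the standard storage interface, and $\mathit{reset}_A$ respects the restriction that the initializer of a transaction-memory segment cannot modify conventional storage, because $\iota_A$ already satisfied that restriction.

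The only delicate point---and thus the main obstacle---is justifying that the persistent simulation of $T_A$ is observationally indistinguishable from a truly volatile segment. Between transactions the persistent copy can of course be stale, but $T_A$ is private to $A$ and accessible only through $A$'s own methods; by the \first{} guard, every access within a transaction is preceded by exactly one execution of $\mathit{reset}_A$, so the stale content is overwritten before it can ever be read. Because $\iota_A$ is forbidden to modify conventional storage, shifting its execution from ``start of transaction'' to ``start of first invocation of $A$'' is observationally harmless: no party could observe $T_A$ in the interval between those two events, nor does either execution of $\iota_A$ leave a visible footprint on $A$'s ordinary storage. This closes the simulation and establishes the equivalence.
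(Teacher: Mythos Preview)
Your proposal is correct and follows essentially the same two-reduction template as the paper: a volatile boolean flag initialised to \emph{true} simulates \first{} from \opmem, and conversely the transaction-memory segment is moved into persistent storage and re-initialised on the first call via \first. The only cosmetic difference is that the paper resets the boolean at the \emph{end} of each method rather than at the beginning after saving it, and your write-up supplies the observational-equivalence justification that the paper leaves implicit.
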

\begin{proof}
  Assume \opmem.
  For each smart contract \(A\) installed in the blockchain, we define a new
  volatile boolean value \(b_{\<init>}\).
  The variable \(b_{\<init>}\) is initialized to true and set to false at the
end of every method in \(A\).
Therefore, the value \(b_{\<init>}\) is true only at the first interaction
with \(A\) and false afterward.

Assume \first, let \(B\) be a smart contract and \(\Om\) be the volatile memory
segment of \(B\).
Extend \(B\) storage with \(\Om\)', a copy of \(\Om\).
At every method in \(B\), we check \first, if it is true we initialize \(\Om\)'
in the same way as the procedure from \opmem.
%

\end{proof}

\begin{lemma}\label{l:first:bsh}
    Mechanisms \first and \<bstore> are equivalent.
\end{lemma}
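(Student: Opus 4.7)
The plan is to prove the equivalence by giving two explicit simulations, following the hint already made in the main text for one direction and using a flag-reset trick for the other.

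For the direction that \first implements \<bstore>, I would follow the idea already sketched: defer the execution of the bounded hookup \(h_A\) of each contract \(A\) to the beginning of the next transaction in which \(A\) is touched. Concretely, I would replace each method \(f\) of \(A\) by a wrapper \(f'\) in the simulated contract \(A'\) whose first action is to test \first; when \first returns true, \(f'\) runs \(h_A\) on the current storage before executing the body of \(f\). Since \(h_A\) is guaranteed to terminate, cannot fail, and only modifies the local storage of \(A\) (it cannot touch the balance or emit operations), its effect is a deterministic function of the stored state and can be applied at any moment between the end of one transaction and the first read inside \(A\) in the next.

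For the converse direction that \<bstore> implements \first, I would extend the storage of each contract \(A\) with a boolean field \(v_A\) initialized to false, and rewrite every method so that upon entry it reads \(v_A\): if \(v_A\) is false the wrapper simulates \first returning true and then sets \(v_A\) to true; otherwise the wrapper simulates \first returning false. The bounded storage hookup of \(A\) simply resets \(v_A\) to false at the end of the transaction, which is a terminating, non-failing, purely storage-local operation and hence a legal \<bstore> procedure. Correctness is then immediate: \(v_A\) is false precisely at the very first entry to \(A\) in each transaction and true throughout the remainder.

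The hard part will be arguing correctness of the first simulation: one must show that deferring \(h_A\) until the next transaction in which \(A\) is touched is indistinguishable from running it at the end of the current transaction. The argument relies on two facts about \<bstore>---that it only writes to \(A\)'s own storage and that it is total and non-failing---together with the structural property of the model of computation that no contract other than \(A\) can inspect \(A\)'s storage without invoking one of \(A\)'s methods, at which point the wrapper will already have restored the intended post-hookup state before any client code runs. If between the two transactions some external view of storage were possible the simulation would strictly speaking be off by one transaction, but inside the model of Section~\ref{sec:modelOfComputation} this is invisible.
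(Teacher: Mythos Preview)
Your proposal is correct and follows essentially the same approach as the paper: for \(\<bstore> \Rightarrow \first\) you both use a boolean flag reset by the hookup, and for \(\first \Rightarrow \<bstore>\) you both defer the effect of the hookup to the first call of the next transaction, relying on the facts that \(h_A\) is total, non-failing, and purely storage-local, and that no one can observe \(A\)'s storage without invoking \(A\). The only difference is cosmetic: the paper eagerly precomputes \(h_A\) into a shadow copy \(s_{\<hookup>}\) at the end of every method and then copies it back on the next \first, whereas you simply run \(h_A\) lazily at that moment; since \(A\)'s storage is untouched between the last call in one transaction and the first call in the next, the two are equivalent and your version is slightly leaner.
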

\begin{proof}
  Assume \first.
  Let \(A\) be a smart contract defining a \(hookup\) function updating the
storage of \(A\).
We expand the storage of \(A\) with a copy of the storage, \(s_{\<hookup>}\).
Moreover, at the end of every method in \(A\), we apply \(hookup\) with the
current storage and store its result at \(s_{\<hookup>}\).
Since function \(hookup\) cannot fail, it does not matter if it is executed at
the end of a transaction or at the very beginning of the next one.
Finally, we can check with \first and update the storage with the value stored
in \(s_{\<hookup>}\).

%
Note that if the contract \(A\) is never invoked again the last change will not
happen.
However, from the point of view of \(A\), it is equivalent.


Assume \<bstore>, and let \(B\) be a smart contract.
Extend \(B\) storage with a boolean field \(b_{\<fst>}\) to represent first, such
that it is initialized to \emph{true}, each invocation set it to \emph{false}
and \<bstore> set it back to \emph{true}.
\end{proof}


\vfill\pagebreak

\section{Impossibility of Monitorability in BFS}
\label{app:bfs:imp}

\setcounter{backup}{\value{theorem}}
\setcounter{theorem}{\value{thm-bfs-monitors}}

\begin{theorem}
A BFS blockchain does not implement transaction monitors.
\end{theorem}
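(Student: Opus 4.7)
The plan is to derive a contradiction by instantiating the ``only-once'' transaction monitor for a contract $A$ --- the monitor that fails a transaction exactly when $A$ is invoked exactly once during it. Suppose, toward contradiction, that some BFS implementation realizes this monitor for some concrete $A$, using ordinary methods and perhaps recurring operations in the sense of Lemma~\ref{lmm:BFS:FailNoFail}. I intend to exhibit two transaction histories that the implementation treats identically from $A$'s point of view, yet on which the specification requires opposite verdicts.

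The construction uses three transactions from an initial blockchain state $(\Sigma_0,\Delta_0)$. First, let $T_2$ be a transaction that invokes $A$ exactly twice and therefore must succeed; write $\sigma$ for the storage of $A$ right after $T_2$ commits, i.e. after all recurring operations scheduled by $A$ during $T_2$ have drained from the pending queue. Second, starting from the state left by $T_2$, let $T_1^{\text{post}}$ be a transaction that invokes $A$ exactly once; by the monitor's semantics this transaction must fail. Third, let $T_3$ be a single transaction that invokes $A$ three times, but engineered so that the third invocation is scheduled only after a sufficiently long cascade of non-$A$ operations produced by a helper contract (e.g., a $C$ that uses its own recurring operations to delay enqueuing $A_3$). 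Because BFS appends newly-enqueued operations to the tail, enough such padding forces every recurring operation of $A$ generated by the first two invocations to drain from the queue before $A$'s third invocation reaches the head. The required amount of padding is finite, because $T_2$ terminates and the recurring chain from $A$'s first two calls is identical in both settings.

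When $A$'s third invocation in $T_3$ finally fires, $A$'s storage has been updated by exactly the same sequence of events --- two invocations of $A$ followed by the full drainage of their recurring descendants --- that produced $\sigma$ at the end of $T_2$. Since $A$'s code is deterministic and reads only its own storage and the invocation arguments, the storage at that point must equal $\sigma$. The third invocation of $A$ in $T_3$ and the lone invocation in $T_1^{\text{post}}$ therefore receive identical inputs, take identical local steps, and schedule identical descendants; a straightforward simulation from this point onward shows that the two tails produce the same verdict. But the monitor requires $T_3$ to succeed (three invocations in one transaction) and $T_1^{\text{post}}$ to fail (one invocation in its transaction), a contradiction. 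The main obstacle I expect is the padding argument: one must argue that a helper contract can always be designed so that $A_3$ is scheduled strictly after every recurring descendant of the first two calls, independently of how $A$ encodes its recurring operations. The key facts are that BFS appends at the tail, that a recurring operation of $A$ can only enqueue operations of $A$, and that the corresponding chain must terminate since $T_2$ does; these together bound how much $C$-level recursion suffices.
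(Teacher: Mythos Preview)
Your argument is correct and follows essentially the same route as the paper's: both use the ``only once'' monitor and exhibit a three-invocation transaction in which the third call to $A$ is indistinguishable (same storage, same inputs) from a fresh single call to $A$ issued right after a committed two-invocation transaction, forcing the implementation to give the same verdict on runs where the specification demands opposite verdicts. The paper additionally threads a preliminary step (via a family $t_k$) arguing that $A$'s first call must spawn a recurring chain, whereas you sidestep this by appealing directly to the termination of $T_2$; your padding construction via $C$ is exactly what the paper's footnote gestures at.
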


\begin{proof}
    Consider a transaction monitor for $A$ that fails smart contract $A$
    is called \textbf{exactly} once in a transaction.
    The monitor storage contains natural number, function \(\<init>\) sets
    the monitor storage to $0$, $\<begin>$ adds one to the counter,
    $\<end>$ does nothing, and $\<term>$ fails if the monitor storage is exactly
    one.

    Now, let $(\Sigma,\Delta)$ be a blockchain configuration, and let $A$
    and $B$ be two smart contracts, where $A$ is being monitored for the
    ``only once'' property. Consider the following family of
    executions of external operations originated by \(B\):

    \begin{itemize}
        \item operation $o$ where $B$ calls $A$ once,
        \item operation $o_k$  where $B$ calls to method \(f\) at smart contract $B$
      with parameter $k$ and also calls $A$. The method \(f\) calls itself recursively $k$ times and then it will call $A$.
    \end{itemize}

    When we execute previous operations in \((\Sigma,\Delta)\),
  we have the following traces:

    \[
    \begin{array}{rcl}
      \bullet & t : & (\Sigma,\Delta, [o]) \leadsto_{\<bfs>} (\Sigma_1,\Delta_1, [a_{1}]) \ldots\\
      \bullet & t_{0} : & (\Sigma,\Delta, [o_{0}]) \leadsto_{\<bfs>} (\Sigma_1,\Delta_1, [b_{0}, a_{1}]) \leadsto_{\<bfs>} (\Sigma_2,\Delta_2, [a_1,a_2] ) \leadsto_{\<bfs>} \\ 
              &         & (\Sigma_3,\Delta_3, [a_2] ++ as_1 ) \leadsto_{\<bfs>} (\Sigma_4,\Delta_4, as_1 ++ as_2 ) \ldots \\
      \bullet & t_{1} : & (\Sigma,\Delta, [o_{1}]) \leadsto_{\<bfs>} (\Sigma_1,\Delta_1, [b_{1},a_{1}]) \leadsto_{\<bfs>} (\Sigma_2,\Delta_2, [a_{1},b'_{1}]) \leadsto_{\<bfs>} \\ 
              &         & (\Sigma_3,\Delta_3, [b'_1] ++ as_1) \leadsto_{\<bfs>} (\Sigma_4,\Delta_4, as_1 ++ [a_2] ) \ldots \\
    \end{array}
  \]

    Monitor ``only once'' rejects transaction $t$, but
    accepts transactions \(t_k\), for all $k \geq 0$.
    Notice that $a_1$ is executed in the same blockchain configuration  in all transactions,
    therefore it must behave the same in all cases.
    The execution of $a_{1}$ in $t$ has three options:
    \begin{enumerate}
        \item do nothing/modify the storage,
        \item fail,
        \item generate a new invocation to $A$.
    \end{enumerate}
    In the first case $t$ will not satisfy the property ``only once'' as the transaction will not fail since $A$ never regains the control.
    On the other hand, in the second case all transactions $t_k$ will violate the monitor as all of them will fail while executing $a_{1}$.
    Therefore, $a_{1}$ must generate a new invocation to $A$, $op_a$. 
    When $op_a$ executes it will have the same options as $a_{1}$, and by making a similar analysis 
    as before, we can conclude that it must also generate a new invocation. 
    The only difference with $a_{1}$ is in the transaction $t_0$ 
    where $a_2$ has already executed and therefore $op_a$ can do nothing. 
    By repeating this argument, we have that the 
    operation generated by $a_{1}$ must be a \emph{recurring operation} (see Section~\ref{s:results:BFSScheduler})
    that keeps invoking itself until another method of $A$ is invoked. 
    If no other method of $A$ is invoked in the same transaction, the recurring operation
    will fail by gas exhaustion.
    This will work correctly for transactions $t$ and $t_k$ with $k \geq 0$.
    %
    %
    However, this will not work for all transactions.
    Let's consider a transaction similar to $t_0$ where $A$ is invoked three times.
    It is originated by an operation $o'_0$ where $B$ calls to method \(f\) at smart contract $B$
    with parameter $0$, then calls $A$ and finally calls method \(f\) with parameter $0$ again. 
    When we execute $o'_0$ in \((\Sigma,\Delta)\), we have the following trace:

    \[
    \begin{array}{rcl}
      \bullet & t'_{0} : & (\Sigma,\Delta, [o_{0}]) \leadsto_{\<bfs>} (\Sigma_1,\Delta_1, [b_{0}, a_{1},b_{0}]) \leadsto_{\<bfs>} (\Sigma_2,\Delta_2, [a_1,b_{0},a_2] ) \leadsto_{\<bfs>} \\ 
              &         & (\Sigma_3,\Delta_3, [b_0, a_2] ++ as_1 ) \leadsto_{\<bfs>} (\Sigma_4,\Delta_4, [a_2] ++ as_1 ++ [a_3] ) \ldots \\
    \end{array}
  \]
    Notice that the third invocation of $A$ originated by $B$ will execute after the recurring operation
    generated by $a_1$ stops.
    \footnote{if the recurring operation does not stop immediately after it see that other method in $A$ was invoked
     it must still stop after a finite number of steps and therefore we can still create a transaction where the third invocation to $A$ executes after it stops.}
    So, if we compare it with the case where $t$ is executed after $t_0$ we have that $a_3$ and $a_1$ will execute in the same state.
    Therefore, they must have the same behaviour.
    As explained before, when $a_{1}$ executes it must call the \emph{recurring operation}.
    However, $a_3$ is the last invocation of $A$ in $t'_0$, then its \emph{recurring operation}
    will keep calling itself until it fails by gas exhaustion, violating
    the property of monitor only once.\qed
\end{proof}

\setcounter{theorem}{\value{backup}}

\setcounter{backup}{\value{lemma}}
\setcounter{lemma}{\value{lemma-bfs-queue}}
  
\begin{lemma}
  In BFS \<ustore> cannot implement \<queue>.
\end{lemma}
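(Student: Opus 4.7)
The plan is to exhibit a specification contract $A$ whose behaviour genuinely depends on whether the pending queue still contains inter-contract work, and then to argue by indistinguishability that no rewrite of $A$ using only \<ustore> can reproduce this behaviour.

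First, I would fix the specification: $A$ has a single method $f$ whose body is to query \<queue> and fail if it reports the queue empty of non-recurring operations, and otherwise terminate silently, touching nothing in $A$'s storage. Second, I would pick two auxiliary contracts $B$ and $C$, which are the same in both blockchains and are not rewritten by the simulation. The contract $B$ exposes two methods $h$ and $g$: $h$ emits the single operation $[A.f]$, while $g$ emits the pair $[A.f,\, C.h]$. The contract $C$ is a bare no-op: $C.h$ reads and writes nothing and has no \<ustore> hookup. I would then compare two transactions from a common configuration $(\Sigma,\Delta)$: $t_1$ triggered by the external operation $B.h$, and $t_2$ triggered by $B.g$. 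Under BFS, at the moment $A.f$ runs, the pending queue is $[]$ in $t_1$ and $[C.h]$ in $t_2$, so the specification $A$ must fail $t_1$ and accept $t_2$.

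Next, for contradiction, I would assume \<ustore> implements \<queue> and let $A'$ be the rewrite of $A$ on a blockchain equipped only with \<ustore>. The central claim is that when $A'.f$ is invoked in $t_1$ and in $t_2$, it is presented with identical inputs: the same parameters passed by $B$, the same caller $B$, the same storage and balance of $A'$ (nothing has touched $A'$ before this point), and the same context data visible to contract code. Determinism then forces $A'.f$ to return the same storage update, emit the same list of follow-up operations, and either fail in both transactions or in neither. Failing in both would wrongly abort $t_2$, so $A'.f$ must succeed in both runs. Finally, I would show that the suffix of the execution after $A'.f$ returns is indistinguishable. Under BFS the operations emitted by $A'.f$ are appended to the queue, and the only difference between the two queues at that moment is the $C.h$ sitting at the front in $t_2$. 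Since $C.h$ does not modify the blockchain state at all, its execution leaves the configuration unchanged; therefore from the step immediately after $C.h$ onwards, the state, the queue tail and the scheduling decisions in $t_2$ coincide with those in $t_1$. In particular, at the end of the transaction $A'$ has the same storage and balance in both runs, so its \<ustore> hookup is invoked on identical inputs and must either accept both transactions or reject both, contradicting the required asymmetry.

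The main obstacle is the indistinguishability argument in the last step, since $A'.f$ may try to launder information about the queue by emitting auxiliary operations that bounce through other contracts, and we must rule out all such side channels. The key leverage is that $C.h$ is a bona fide no-op on the blockchain state, so it can be excised from the trace of $t_2$ without altering anything reachable from $A'$; once this is formalized by an induction on the length of the remaining execution, every observation available to $A'$ or to its \<ustore> hookup is pointwise equal in $t_1$ and $t_2$, and the contradiction follows.\qed
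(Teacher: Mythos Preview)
Your proof is correct and follows essentially the same approach as the paper: exhibit a \<queue>-using contract $A$ and two transactions that differ only by an inert call to $C$ sitting in the pending queue at the moment $A$ runs, then argue by determinism that any \<ustore>-based rewrite $A'$ sees identical inputs in both runs---hence behaves identically, including in its end-of-transaction hookup---forcing both transactions to the same verdict. The only cosmetic difference is the polarity of $A$'s test (you fail when \<queue> reports the queue empty, the paper fails when it reports it nonempty), and you are more explicit than the paper about handling operations possibly emitted by $A'.f$ and excising the no-op $C.h$ from the trace, but the core indistinguishability argument is the same.
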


\begin{proof}
  Assume a BFS blockchain that implements \<queue>.  Let $A$ be a smart
  contract that fails if and only if \<queue> is false.  Let
  \((\Sigma,\Delta)\) be a blockchain system, and consider the
  following two executions of external operations in
  \((\Sigma,\Delta)\):
  \begin{compactitem}
  \item operation $o_1$ calls $B.f$ which then generates two operations:
    $o_{A1}$ to $A$ and $o_{C1}$ a call to $C$ which does not generate
    any new operation.
  \item operation $o_2$, which calls $B.g$ which in turn generates
    $o_{A1}$, a call to $A$.
  \end{compactitem}
  Clearly, $A$ will fail in the execution of $o_1$ but not on $o_2$.
  In both cases the execution $o_{A1}$ will run in the same blockchain
  configuration and with same parameter and without \<queue> it is not
  possible for $A$ to know that there are pending operations as they
  will execute in another smart contract.
  Therefore, $o_{A1}$ will behave exactly the same in both transactions.
  As this is the only operation in $A$ in both transactions, the
  hookup method from the \<ustore> will also have the same behaviour
  in both cases.
  As consequence, either both transactions fail (making the
  transaction generated by $o_2$ fail incorrectly) or none (making the
  transaction generated by $o_1$ succeed incorrectly).
  Therefore, the smart contract $A$ cannot be implemented in a
  BFS blockchain equipped with \<ustore> but without \<queue>.\qed
\end{proof}

\setcounter{lemma}{\value{backup}}


\vfill\pagebreak

\section{DFS Proofs}
\label{app:dfs}

\setcounter{backup}{\value{lemma}}
\setcounter{lemma}{\value{lem-dfs-queue-first}}

\setcounter{lemma}{\value{backup}}

\setcounter{backup}{\value{lemma}}
\setcounter{lemma}{\value{lem-dfs-fail-tmons}}

\begin{lemma}
  \label{lmm:DFS:QInfoFnF:TMons:proof}
  Under DFS \<queue> and \<fail> cannot implement transaction
  monitors.
\end{lemma}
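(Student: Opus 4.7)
The plan is to mirror the template used in Lemma~\ref{lem:dfs-queue-first-nomon} for the \<queue>$+$\<first> case: exhibit a specific transaction monitor and a family of transactions for which any implementation based only on \<queue> and \<fail> must produce a wrong verdict under DFS. I would reuse the ``only once'' monitor for some contract $A$ (which fails iff $A$ is called exactly once in a transaction), together with the family of transactions $t_k$ in which an external caller $B$ emits the list of operations $[o_A, o_A, \ldots, o_A, o_C]$ consisting of $k$ copies of a call $o_A$ to $A$ followed by a single call $o_C$ to a third contract $C$. Under DFS, at every $A$-call in every transaction of this family the pending queue still contains the non-recurrent operation $o_C$, so \<queue> uniformly returns false at every $A$-call, and \<first> is not available. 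Consequently each $A$-call's behaviour is a deterministic function of $A$'s storage, of the parameters of $o_A$, and of the constantly-false \<queue> value.

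Next I would track how $A$'s storage evolves in this family. Starting from some initial state $S_0$, let $S_i$ denote the (deterministic) state reached after $A$'s $i$-th call in any transaction $t_k$ with $k\geq i$. The implementation must simultaneously satisfy: (i) $t_1$ from $S_0$ must fail, so its $a_1$ must leave the \<fail> flag set to true; (ii) $t_2$ from $S_0$ must succeed, so---since by determinism its $a_1$ also leaves \<fail> set to true---its $a_2$ running at state $S_1$ must reset \<fail> back to false; (iii) $t_3$ from $S_0$ must also succeed, so $a_3$ running at state $S_2$ with \<fail> equal to false must leave it at false.

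The contradiction then comes from the two-transaction sequence ``$t_2$ followed by a fresh $t_1$''. After $t_2$ commits the blockchain stores exactly $S_2$; the subsequent $t_1$ begins with \<fail> equal to false (the flag is reset per transaction by default). Its single call $a_1$ therefore executes with storage $S_2$, \<queue> returning false, \<fail> equal to false, and the same parameters as any other $A$-call---exactly the inputs seen by $a_3$ in item~(iii). By determinism of $A$'s code, $a_1$ must produce the same output as $a_3$ and, in particular, leave \<fail> at false; hence this $t_1$ succeeds, contradicting the ``only once'' specification.

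The main obstacle is to justify carefully that $A$'s behaviour really does depend only on the inputs enumerated above. Any failure caused by $a_1$ through an in-code assertion or through gas exhaustion of a recurrent operation would abort every transaction $t_k$ in the same call and thus cannot separate $t_1$ from $t_k$ with $k>1$, so \<fail> is the only remaining lever the implementation can use. Potential side-channels through the blockchain context $\Delta$ (block number, timestamp, etc.) can be neutralised either by restricting attention to monitor implementations that do not read $\Delta$, or by placing the comparison executions in blockchain positions where the relevant parts of $\Delta$ coincide, in the same spirit as the previous impossibility arguments in this paper.
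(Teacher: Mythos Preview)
Your proposal is correct and follows essentially the same strategy as the paper's own proof: use the ``only once'' monitor, append a trailing call to $C$ so that \<queue> is uniformly false at every $A$-call, track the deterministic state sequence $S_0,S_1,S_2$, and derive a contradiction by matching an $A$-call at state $S_2$ with \<fail> equal to false across two scenarios that require opposite verdicts. The only difference is cosmetic: you obtain the contradiction by comparing $a_3$ in $t_3$ against the single call in a fresh $t_1$ run after $t_2$ commits, whereas the paper compares the last $A$-call inside a combined transaction $[o_2;o_1]$ against that same fresh $t_1$; both arguments pivot on exactly the same indistinguishability at $S_2$.
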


\begin{proof}
  Consider a transaction monitor for $A$ that fails smart contract $A$
  is called \textbf{exactly} once in a transaction.
  The monitor storage contains natural number, function \(\<init>\) sets
  the monitor storage to $0$, $\<begin>$ adds one to the counter,
  $\<end>$ does nothing, and $\<term>$ fails if the monitor storage is exactly
  one.

  Now, let $(\Sigma,\Delta)$ be a blockchain configuration, and let $A$,
  $B$ and $C$ be three smart contracts, where $A$ is being monitored for the
  ``only once'' property.
  %
  We analyze the pending queue of execution of two possible external operations
  originated by \(B\):
  \begin{enumerate}
    \item \(o_{1}\) where $B$ calls $A$ once and then $C$
    \item \(o_{2}\) where $B$ calls $A$ twice and then $C$
  \end{enumerate}
  We assume that there are no additional invocations to \(A\) aside from the
  described above.
  When we execute both operations in a \((\Sigma,\Delta)\) blockchain configuration,
  we have the following two traces:
  \[
    \begin{array}{rcl}
      \bullet & t_{1} : & (\Sigma,\Delta, [o_{1}]) \leadsto_{\<dfs>} (\Sigma',\Delta', [a_{1},c_{1}]) \ldots \\
      \bullet & t_{2} : & (\Sigma,\Delta, [o_{2}]) \leadsto_{\<dfs>} (\Sigma',\Delta', [a_{2},a_{3},c_{1}]) \ldots
    \end{array}
  \]
  
  Note that the presence of operation \(c_{1}\) in the pending execution queue is
  forcing the mechanism \QueueInfo to return false.
  Operations \(a_{1}\) and \(a_{2}\) are exactly the same operation invoking smart
  contract \(A\).
  Since operations \(o_{1}\) and \(o_{2}\) execute on the same blockchain configuration,
  i.e. \((\Sigma,\Delta)\), the result of executing \(a_{1}\) and \(a_{2}\) should be same
  on \(A\).
  
  The transaction executing \(o_{1}\) must fail because $A$ is called only
  once.
  Then, this execution of $A$ must fail, and thus, \(a_{1}\) should fail
  directly, set the failing bit to true, or call another operation that will fail later.
  
  The transaction executing \(o_{2}\) should not fail.
  Therefore, \(a_{1}\) cannot fail directly, and thus, it must either 
  set the failing bit to true or call another operation, \(a_{4}\).
  In the latter case, since the scheduler strategy is DFS, 
  operation \(a_{4}\) replaces \(a_{1}\) as
  the head of the pending execution queue in both cases.
  Moreover, every operation resulting from the execution of operation \(a_{4}\) is
  going to be appended to the head of the pending execution queue.
  
  \[
    \begin{array}{rcl}
      \bullet & t_{1} : & (\Sigma,\Delta, [o_{1}]) \leadsto_{\<dfs>} (\Sigma',\Delta', [a_{1},c_{1}]) \leadsto_{\<dfs>} (\Sigma'',\Delta'', [a_{4},c_{1}]) \ldots\\
      \bullet & t_{2} : & (\Sigma,\Delta, [o_{2}]) \leadsto_{\<dfs>} (\Sigma',\Delta', [a_{2},a_{3},c_{1}]) \leadsto_{\<dfs>} (\Sigma'',\Delta'', [a_{4},a_{3},c_{1}]) \ldots
    \end{array}
  \]

  If operations \(a_{1}, a_{4}\) or any of its decedents explicitly fail, then the execution
  of operation \(o_{2}\) is doomed to fail too.
  
  However, if none of these operations, \(a_{1},a_{4},\ldots\), fail then the first run
  will finish without failing, violating the property of monitor only once.

  Therefore, \(a_{1}\) or any of its decedents must set the failing bit to true.
  This will work correctly for the execution of operation \(o_{1}\).
  In order for this to also work for the execution of operation \(o_{2}\)
  it is enough if \(a_{3}\) sets the failing to false.
  However, if we consider the transaction that begins with two external operations  
  \([o_{2};o_{1}]\) and the execution of the transaction $t_2$ followed by $t_1$.
  We have that $o_1$ will run in both cases in the same blockchain 
  configuration, and therefore it will have the same behaviour in both cases. 
  As explained before, when \(a_{1}\) executes it or one of it descendants must 
  set the failing bit to true.
  This implies that the execution originated by \([o_{2};o_{1}]\) will finish with $A$'s
  failing bit set to true and thus the whole transaction will fail, violating
  the property of monitor only once.\qed
\end{proof}

\setcounter{lemma}{\value{backup}}

\begin{lemma}\label{lmm:DFS:Not:QInfo}
  A DFS blockchain does not implement \<queue>.
\end{lemma}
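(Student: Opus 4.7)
The plan is to construct a smart contract $A$ whose correct behavior depends on \<queue>, together with two external operations whose executions look identical from the viewpoint of $A$ in a DFS blockchain that lacks \<queue>, yet on which \<queue> must disagree. This reuses the schema of Lemma~\ref{l:bfs:queue} and of Lemma~\ref{lem:dfs-queue-first-nomon}, adapted to the DFS scheduling discipline. Concretely, I would take $A$ to be a contract whose sole method fails if and only if \<queue> returns \textsf{true} at some designated point in its body. Then I fix a blockchain configuration $(\Sigma,\Delta)$ and two external operations originated by a distinct contract $B$: $o_1$, in which $B$ merely calls $A$ once; and $o_2$, in which $B$ calls $A$ and then calls a third contract $C$ that never touches $A$. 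Under DFS, after the initial $B$-step the pending queues are $[a_1]$ in the first trace and $[a_1, c_1]$ in the second. By the intended semantics of \<queue>, the invocation $a_1$ must succeed in one trace (queue empty of further inter-contract interaction) and fail in the other, so the two transactions must have observably different outcomes.

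The heart of the argument is to show that no combination of the other DFS-available mechanisms (\<first>, \<count>, \<fail>, \<bstore>, \<ustore>, \opmem) lets $A$ tell the two traces apart. At the moment $a_1$ begins, the blockchain state, $A$'s persistent storage, $A$'s monitor/volatile storage, the value of \<first>, the value of \<count>, and the current \<fail> bit are all identical in both runs, and $a_1$ receives identical parameters. Hence $a_1$ (and any \Abegin/\Aend code inlined around it) must produce identical effects on $A$'s state and emit the same list of descendant operations. Since DFS schedules every descendant of $a_1$ strictly before $c_1$, a straightforward induction on scheduling depth shows that all operations reachable from $a_1$ execute in the same configurations in both traces; the only ways in which the two traces can differ are in the state of $C$ after $c_1$ runs (in the $o_2$-trace), and in the portion of the pending queue strictly beyond the subtree of $a_1$, neither of which is observable through any of the listed mechanisms.

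Finally I would rule out the end-of-transaction hooks, which is the subtle step. Mechanisms \<bstore> and \<ustore> of $A$ can only read $A$'s own storage, and we have just shown that $A$'s storage at end-of-transaction is fully determined by $A$'s own executions, which agree across the two traces. Likewise the \<fail> bit of $A$ is set by $A$'s code alone and coincides in both runs. Therefore whichever strategy $A$ adopts---failing inside $a_1$, setting the \<fail> bit, or delegating the verdict to \<ustore>---it must either fail in both transactions or in neither, contradicting the requirement. The main obstacle is precisely this last step: one must make sure that \<ustore>, being the most permissive mechanism, cannot covertly encode the missing queue information via detours through other contracts. This is handled by the observation that $c_1$ executes strictly after $A$'s last activity in the $o_2$-trace, so no effect of $c_1$ is ever reflected in $A$'s storage before \<ustore> runs on $A$, and \<ustore> is syntactically forbidden from reading the state of $C$ or from inspecting the historical pending queue.
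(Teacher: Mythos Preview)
Your core argument---define a contract $A$ whose behavior depends on \<queue>, then exhibit two DFS transactions (one with a trailing call to a third contract $C$, one without) that are indistinguishable from $A$'s local viewpoint---is exactly the paper's approach. However, you are proving more than the lemma asks: the statement concerns a \emph{bare} DFS blockchain, so your extended discussion of why \<first>, \<count>, \<fail>, \<bstore>, \<ustore>, and \opmem cannot help is unnecessary here (the paper's proof simply observes that $o_{A1}$ runs in the same configuration with the same parameters in both traces and stops). What you sketch is correct and would yield a stronger result, but it is not what this lemma claims. One minor slip: you let $A$ fail iff \<queue> returns \textsf{true}, yet then write that $a_1$ ``must succeed'' in the trace where the queue is empty; the polarity is reversed there, though this does not affect the argument since all that matters is that the two required outcomes differ.
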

\begin{proof}
Let \(\mathcal{B}\) be a DFS blockchain and \(A\) a smart contract installed
in \(\mathcal{B}\) that fails if and only if \<queue> is false.  
Let \((\Sigma,\Delta)\) be a blockchain system, and consider the
following two executions of external operations in
\((\Sigma,\Delta)\):
\begin{compactitem}
\item operation $o_1$ calls $B.f$ which then generates two operations:
  $o_{A1}$ to $A$ and $o_{C1}$ a call to $C$ which does not generate
  any new operation.
\item operation $o_2$, which calls $B.g$ which in turn generates
  $o_{A1}$, a call to $A$.
\end{compactitem}
  Clearly, $A$ will fail in the execution of $o_1$ but not on $o_2$.
  In both cases the execution $o_{A1}$ will run in the same blockchain
  configuration and with same parameter and without \<queue> it is not
  possible for $A$ to know that there are pending operations as they
  will execute in another smart contract.
  Therefore, $o_{A1}$ will behave exactly the same in both transactions.
  As consequence, either both transactions fail (making the
  transaction generated by $o_2$ fail incorrectly) or none (making the
  transaction generated by $o_1$ succeed incorrectly).
  Therefore, the smart contract $A$ cannot be implemented in a
  DFS blockchain without \<queue>.\qed
\end{proof}



\vfill\pagebreak

\section{Implementations of Flash Loan}
\begin{figure}[t!]
    \begin{lstlisting}[language=Solidity,numbers=none,caption=A correct Flash Loan implementation using U. Storage Hookup,label={flashLoan:lenderWithUStorageHookup}]
      contract Lender {
        uint initial_balance;    
        function lend(address payable dest, uint amount) public {
          require(amount <= this.balance);
          dest.transfer(amount);
        }  
        ustore { 
            assert(this.balance >= initial_balance); 
            initial_balance = this.balance;
        }
      }
   \end{lstlisting}
\end{figure}
\begin{figure}[t!]
    \begin{lstlisting}[language=Solidity,numbers=none,caption=A correct Flash Loan implementation using First + Fail/NoFail Hookup,label={flashLoan:lenderWithFirstFNFHookup}]
      contract Lender {
        uint initial_balance;    
        function lend(address payable dest, uint amount) public {
          if(this.first){
             initial_balance = this.balance; 
          }  
          require(amount <= this.balance);
          dest.transfer(amount);
          check_balance();
        }  
        receive() external payable {
          check_balance();
        }
        function check_balance() private {
            if(this.balance < initial_balance){
              this.fail = true;
            } else {
              this.fail = false;
            }
        }
      }   
   \end{lstlisting}
\end{figure}

\begin{figure}[t!]
    \begin{lstlisting}[language=Solidity,numbers=none,caption=A correct Flash Loan implementation using BFS Scheduler and First. Each function returns the list of generated operations.,label={flashLoan:lenderWithFirstBFS}]
      contract Lender {
        uint initial_balance;    
        function lend(address payable dest, uint amount) public {
          if(this.first){
             initial_balance = this.balance; 
          }  
          return [system.gen_op(dest, (), amount),system.gen_op(this.check_balance, (), 0)];  
        }  
        receive() external payable {
          return [system.gen_op(this.check_balance, (), 0)];
        }
        function check_balance() private { 
            if(this.balance < initial_balance){
                return [system.gen_op(this.check_balance, (), 0)];
            }
            return [];
        }
      }   
   \end{lstlisting}
\end{figure}
\begin{figure}[t!]
    \begin{lstlisting}[language=Solidity,numbers=none,caption=A correct Flash Loan implementation using BFS + Qeueu Info,label={flashLoan:lenderWithBFSQueueInfo}]
      contract Lender {
        uint initial_balance;    
        function lend(address payable dest, uint amount) public {  
          require(amount <= this.balance);
          dest.transfer(amount);
          check_balance();
        }  
        receive() external payable {
          check_balance();
        }
        function check_balance() private {
            if(system.queue) {
                assert(this.balance >= initial_balance);
                initial_balance = this.balance;
            } else {
                check_balance();
            }
        }
      }   
   \end{lstlisting}
\end{figure}

\end{document}